\newtheorem{theorem}{Theorem}
\newtheorem{lemma}[theorem]{Lemma}
\theoremstyle{definition}
\newtheorem{definition}{Definition}
\theoremstyle{remark}
\newcommand{\ignore}[1]{}
\newcommand{\dkgname}{Any-Trust DKG}
\newcommand{\name}{Scalable and Adaptively Secure Any-Trust Distributed Key Generation and All-hands Checkpointing}
	\providecommand\BibTeX{{%
			\normalfont B\kern-0.5em{\scshape i\kern-0.25em b}\kern-0.8em\TeX}}}
\author{\vspace{0.3cm} Hanwen Feng$^*$, Tiancheng Mai$^*$, and Qiang Tang$^*$}
\affiliation{\vspace{0.3cm}
  \institution{$^*$ School of Computer Science \\University of Sydney, Australia}
  \city{}
  \country
  {\tt\{hanwen.feng,tiancheng.mai,qiang.tang\}@sydney.edu.au}\vspace{0.5cm}
}
\title{\name\ }
\begin{document}
\fancyhead{}

\begin{abstract}
	The classical distributed key generation protocols (DKG) are resurging due to their widespread applications in blockchain. While efforts have been made to improve DKG communication, practical large-scale deployments are still yet to come due to various challenges, including the heavy computation and communication (particularly broadcast) overhead in their adversarial cases.
	In this paper, we propose a practical DKG for DLog-based cryptosystems, which achieves (quasi-)linear computation and communication per-node cost with the help of a common coin, even in the face of the maximal amount of Byzantine nodes. Moreover, our protocol is secure against adaptive adversaries, which can corrupt less than half of all nodes. The key to our improvements lies in delegating the most costly operations to an \textit{Any-Trust} group together with a set of techniques for adaptive security. This group is randomly sampled and consists of a small number of individuals. The population only trusts that at least one member in the group is honest, without knowing which one. Moreover, we present a generic transformer that enables us to efficiently deploy a conventional distributed protocol like our DKG, even when the participants have different \textit{weights}.
	Additionally, we introduce an extended broadcast channel based on a blockchain and data dispersal network (such as IPFS), enabling reliable broadcasting of arbitrary-size messages at the cost of constant-size blockchain storage. 
	
	Our DKG leads to a fully practical instantiation of Filecoin's checkpointing mechanism, in which {\em all} validators of a Proof-of-Stake (PoS) blockchain periodically run DKG and threshold signing to create checkpoints on Bitcoin, to enhance the security of the PoS chain. In comparison with the recent checkpointing approach of Babylon (Oakland, 2023), ours enjoys a significantly smaller cost of Bitcoin transaction fees. For $2^{12}$ validators,  our cost is merely 0.4\% of that incurred by Babylon's approach.
\end{abstract}

\maketitle


\section{Introduction}
Distributed key generation protocols (DKG) \cite{Pedersen91, GennaroJKR07} enable a set of participants to jointly generate a public key, and each of them outputs a secret key share. It is a classical topic and the basis of threshold cryptography. They were usually considered in small-scale in-house applications. There are recent resurge of interests of those protocols, mostly because of a diverse set of new blockchain applications, for example, cross-chain bridge \cite{LeeMDG23, CerulliCNPS23}, MEV protection \cite{MalkhiS22, Skale, Osmosis}, censorship resistance in asynchronous consensus \cite{MillerXCSS16, GuoL0XZ20}, checkpointing into Bitcoin \cite{AzouviV22}, and more. Those new applications raise a new fundamental challenge of deploying distributed key generation on a very large scale.

\smallskip\noindent{\bf Enhancing Proof of Stake Security via All-hands Checkpointing.} As one main motivational example of large-scale DKG, we elaborate on the checkpointing mechanism, which aims at addressing a prominent security challenge in the Proof-of-Stake (PoS) blockchain known as \textit{long-range attacks} \cite{BMS21}. At a high level, in a PoS blockchain, validators (with stakes) are in charge of proposing blocks; as time evolves, validators' secret keys could be leaked or simply sold (when all coins are spent) to the adversary, who can now easily create a fork from a historic block using the corresponding secret keys. For a newly joined node, such a fork is also considered valid. This long-range attack hints at an inherent vulnerability of ``revisionist history'' in PoS (and other resources such as space) blockchain.

One promising defense approach is to leverage proof-of-work (PoW) blockchains which are immune to such attacks, particularly to periodically let {\em the whole PoS network} (e.g., all validators) produce checkpoints and put them into Bitcoin. A recent work, Babylon \cite{TasTGKMY23}, let all validators generate checkpoints via multi-signatures and select some of them to post the checkpoints to Bitcoin. It follows that the number of Bitcoin transactions needed for each checkpoint grows linearly in the number of PoS validators (to at least put a bit vector for indicating corresponding public keys). Particularly, for a PoS chain with $2^{12}$ validators (e.g., Filecoin), the annual cost would be over 6 million USD (using the Bitcoin price retrieved on March 31st, 2024, and assuming checkpoints are created hourly). 

Instead, Filecoin proposed a blueprint for the checkpointing mechanism called Pikachu \cite{AzouviV22} via threshold Schnorr signature \cite{KomloG20, GennaroJKR07}. Specifically, all validators of a PoS chain need to run a DKG protocol for \textit{every epoch}, and the resulting public keys will serve as Bitcoin addresses. At epoch $i$, the validators from epoch $i-1$ will jointly create a Bitcoin transaction via a threshold Schnorr signing protocol, which contains the state of the PoS chain at epoch $i-1$ and transfers all coins from the address created in epoch $i-1$ to the newly created address. In doing so, the Bitcoin transactions uniquely decide the state of the PoS chain in each epoch, and they are verifiably endorsed by the majority of validators. Since this approach only needs exactly one Bitcoin transaction for each checkpoint, the Bitcoin transaction fees incurred are always a small constant regardless of the scale of the PoS blockchain network and much lower than Babylon's approach.

Despite being appealing, and recent progress on threshold Schnorr signatures \cite{KomloG20, CritesKM23, BellareCKMTZ22} could potentially be deployable, Pikachu remains in theory (or toy prototype) as all validators have to jointly run a DKG protocol for every checkpoint (as validator set evolves thus previous DKG cannot be reused). Particularly, even a moderate-scale PoS chain can have thousands of validators. Moreover, some applications need to be done in a ``timely'' manner, which makes the task more challenging. 
For example, Filecoin currently has around $2^{12}$ validators, with anticipated growth to $2^{14}$ validators in the future, while checkpoints may be supposed to be created hourly (as suggested in \cite{TasTGKMY23}). To deploy the Pikachu checkpointing into Bitcoin mechanism \cite{AzouviV22} for real-world blockchains (for example, Filecoin), we need to design a scalable DKG protocol that can be efficiently run among all validators in popular public blockchains.

\smallskip\noindent{\bf Existing DKGs are practically infeasible at a whole-chain scale.} Let us first briefly introduce the common paradigm for DKG protocols, which subsumes most DKG constructions, including \cite{Pedersen91, GennaroJKR07, KateZG10, TomescuCZAPGD20, ZhangXHSZ22}, to illustrate the astronomical communication and computation costs of existing DKGs in a large scale. 

In a nutshell, among $n$ participants where up to $t$ could be adversarial, each participant $P_i$ selects a $t$-degree polynomial $f_i$ to define $sk^{(i)}= f_i(0)$. They then {deliver} the share (as a dealer of a verifiable secret sharing (VSS) \cite{Pedersen91}) $sk_j^{(i)}= f_i(j)$ to other $P_j$ and \textit{broadcast} \footnote{Broadcast satisfies \textit{agreement}, i.e., all parties receive the same message even when the sender is malicious. Thus, it is more complicated and expensive than multicast.} a commitment, $\mathsf{com}_i$, for the polynomial $f_i(X)$. Then, each participant $P_j$ could verify if $sk_j^{(i)}$ is a valid share w.r.t. $\mathsf{com}_i$; If there are invalid shares, the participants will collectively engage in a {\em complaint} phase, where they \textit{broadcast} complaints and identify the set of qualified dealers $\mathsf{Qual} \subset [n]$, ensuring that all transmitted secret shares are valid. The final secret share for $P_i$ is $sk_i = \sum_{j\in \mathsf{Qual}} sk_i^{(j)}$, and the aggregate secret key is $sk = \sum_{j\in \mathsf{Qual}} sk^{(j)}$. 

The DKG scheme by Kate, Zaverucha, and Goldberg \cite{KateZG10} (and its recently improved version by Zhang et al. \cite{ZhangXHSZ22}, dubbed KZG hereafter) represents the state of the art following this paradigm. 
In KZG, the commitment to the polynomial $f_i$ is constant in size and enables validating a secret share with constant-sized information. In an optimistic case, when all participants are honest, KZG protocol can remain efficient even for large-scale deployment, as demonstrated in \cite{ZhangXHSZ22}. However, since there is a constant fraction of adversarial participants, its performance got dramatically worse.
Particularly, when another node complains about a dealer, the dealer shall \textit{broadcast} the corresponding secret share for public verification. Hence, in facing $O(n)$ malicious nodes, each complaining $O(n)$ nodes, there are $O(n^2)$ shares to be broadcasted, and every node shall verify these shares. Indeed, improving the adversarial case performance is the major open problem left by \cite{TomescuCZAPGD20}. 

We can readily anticipate that both communication and computation costs would skyrocket as the scale increases, as seen in scenarios like Filecoin validators. For instance, with $2^{12}$ participants in the DKG protocol, the entire network would need to transmit {\em tens of terabytes} of data, while each node would have to allocate {\em multiple hours} to verify shares (in response to complaints) to produce just one public key!\footnote{For detailed numerical estimates and comparisons, we refer to Sect.\ref{sect:estimatedformance}.}

We remark that publicly verifiable secret sharing (PVSS) can eliminate the complaint phase in DKG \cite{CascudoD17, FouqueS01,GentryHMNY21, BachoL23}, as each party could now broadcast all ``encrypted'' shares and enable the public verification immediately. However, existing PVSS-based DKGs are either even more costly than KZG \cite{FouqueS01,Groth21} or only generate group-element secrets, while mainstream threshold cryptographic protocols like threshold Schnorr signatures use field-element secrets. In addition, no existing PVSS schemes with field-element secrets are provably secure against adaptive attackers \cite{BachoL23}, while adaptive security is desired in practice\footnote{While there are two very recent PVSS-based DKG works \cite{BachoLLOP23, FengLT24} have further pushed down the asymptotical complexity of DKG, they are either only with group-element secrets or statically secure.}. 

Besides those high costs, one extra challenge may make things even worse: in PoS chains, validators usually have different \textit{weights} (proportional to the number of held stakes), while a threshold of weights is assumed to belong to honest validators. Simply viewing a validator as a participant in DKG can be leveraged by an adversary to amplify its power: the adversary can choose to corrupt many validators with small weights and eventually control the majority of DKG participants within its budget. 
A naive approach is to allocate different numbers of sub-IDs proportional to their weights. Each sub-ID is then treated as an independent participant. While this approach addresses the security concern, it can lead to an enormous number of sub-IDs. For example, we would need to allocate 674 trillion sub-IDs to 3700 Filecoin validators\footnote{\url{https://filfox.info/en/ranks/power}. It has a total mining power of around 25 EB, while the power unit is 32KB, so 674 trillion sub-IDs are needed. }.

It follows that the following question remains: 
\begin{center}
{\em Could we have a practically feasible DKG protocol that enables all-hands participation in blockchains with weighted validators, as well as being adaptively secure?}
\end{center}


\subsection{Our Results}
In this article, we give an affirmative answer to this question. We proceed in two main steps:


\begin{small}
	\begin{table*}[ht]
		\centering
		\caption{Comparison with the state-of-the-art DKGs for DLog-based Cryptography. }\label{tb:dkgcompare}
		\begin{tabular}{c|cc|c|c|cc}
			\multirow{2}{*}{\textsf{Schemes}} & \multirow{2}{*}{\textsf{Resilience}} & \multirow{2}{*}{\textsf{Adap.?}*} & \multicolumn{2}{c|}{\textsf{Comm. Cost (total)***}} & \multicolumn{2}{c}{\textsf{Comp. Cost (per node)**}}  \\
			& ~ &~& \multicolumn{1}{c}{\textsf{Good}\dag}&\textsf{Bad}\dag& \textsf{Good}\dag&\textsf{Bad}\dag \\
			\hline
			Pedersen \cite{Pedersen91}&$1/2$&\Checkmark&$O(n\mathcal{B}(n\secpar))$  + $O(n^2\secpar)$&-&$O(n^2)$&-\\
			KZG\cite{KateZG10, ZhangXHSZ22}& 1/2  &\Checkmark & $O(n\mathcal{B}(\secpar))$+ $O(n^2\secpar)$&$+O(n\mathcal{B}(n\secpar))$&$O(n\log n)$&$+O(n^2)$ \\
			\hline
			GHL \cite{GentryHL22}\ddag & $1/2$ &\XSolidBrush&	\multicolumn{2}{c|}{$O(n\mathcal{B}(n\secpar))$}&\multicolumn{2}{c}{$O(n^2)$} \\
			GJM+ \cite{GurkanJMMST21}\ddag & ${\log n}/{n}$ &\XSolidBrush&
			\multicolumn{2}{c|}{$O(n\mathcal{B}(\secpar)+ \log n\mathcal{B}(n\secpar))$}
			&\multicolumn{2}{c}{$O(n\log^2n)$}\\
			\hline
			BHK+\cite{BenhamoudaHKMR22} \S& $\approx 1/4$  &\Checkmark& $O(C\mathcal{B}(C\secpar))$ + $O(C\mathcal{M}(C^2\secpar))$&-&$O(C^3)$&-\\
			Ours (Sect.\ref{sect:dkg})\S& 1/2 &\Checkmark& $O(s\mathcal{B}(n\secpar))$ &+$O(n\mathcal{M}(s\secpar))$&$O(sn)$&-
		\end{tabular}
			{\\
				\raggedright *\textsf{Adap.?} asks if the protocol is adaptively secure, and we accept the relaxed definition from \cite{BachoL22}\\
			**\textsf{Comp.Cost} measures the number of group exponentiation operations performed by each node.\\
			***   \textsf{Comm.Cost} measures the total communication cost, where $\mathcal{B}(\ell)$ (or $\mathcal{M}(\ell)$) denotes the cost of one node broadcasting (or multicasting) $\ell$ bits to the network, and $O(\ell)$ means there are $O(\ell)$ bits sent by honest nodes over pair-wise channels.\\ 
		\dag For both communication and computation, \textsf{Good} considers the cost without complaints, \textsf{Bad} considers the \textbf{extra} cost when facing the maximal number of complaints; ``-" represents no asymptotically greater cost. \ddag GHL and GJM+ do not have a complaint phase.\\ 
		\S BHK+ and ours are committee-based approaches. For ensuring the quality of the committee with high probability (say $1-5\times10^{-9}$, as adopted by Algorand \cite{ChenM19}), BHK+ needs the committee size of { $C\approx 6000$} (estimated based on \cite{BenhamoudaG0HK020}), while ours only needs {$s=38$} (further analysis available in Table.\ref{tab:committee_sizes}). 
			\par}
	\end{table*}
\end{small}

\smallskip\noindent{\bf A scalable and adaptively secure DKG.}
Our primary result is a practical DKG protocol (called \dkgname) for DLog-based cryptographic systems. We compare our scheme with state-of-the-art efficient DKG constructions\footnote{We focus on fully synchronous networks; asynchronous or partial synchronous DKGs \cite{GaoLLTXZ22, AbrahamJMMS23, DasYXMK022} are not included in the table, as
their implementations cannot simply leverage a broadcast channel and appear to be more expensive.} in Table \ref{tb:dkgcompare} and discuss more related works in Sect.\ref{ap:relatedworks}. Particularly, our DKG protocol features:
 
 \smallskip\noindent{\em -- Efficiency:} It enjoys (quasi-)linear (in $n$) per-node computation complexity\footnote{Although evaluating $O(n)$-degree polynomials at $O(n)$ points inherently causes $O(n\log n)$ computation, we only require $O(n)$ expensive group exponentiations.}, even in adversarial cases. Additionally, the size of {\em all} data to be broadcasted also only grows linearly in $n$. In contrast, previous constructions suffer from quadratic per-node computation and quadratic broadcast overhead. Specifically, as our experiments will demonstrate, for $n=2^{12}$, each node can complete all computation tasks in approximately $26$ seconds, with the data to be broadcasted totaling around $7.7$ MB in size. 

\smallskip\noindent \textit{-- Security:} Our protocol achieves optimal resilience and satisfies the security definitions achievable by classical DKGs. Specifically, in the presence of adaptive adversaries (who cannot retract messages sent by honest parties, as in many settings, such as Algorand \cite{GiladHMVZ17}), our scheme complies with the oracle-aided algebraic simulatability, which was recently introduced in  \cite{BachoL22} for capturing the adaptive security of many practical DKGs including Pedersen \cite{Pedersen91} and KZG.

We remark that our primary goal is to enable massive-scale DKG; we may use resources that are naturally available in the application settings. Compared with the classical DKG schemes, our protocol additionally leverages one common coin that is generated after all participants' public keys are determined (as the YOSO-model DKG \cite{BenhamoudaHKMR22}), which is available in most blockchains. Nonetheless, it enables a distributed randomness beacon \cite{ChoiMB23} for continuous coin generation, by applying threshold unique signatures with secret shares from the DKG \cite{CachinKS05} . 
 We introduce a set of techniques for the efficient, adaptively secure DKG (detailed in the next section).

\smallskip\noindent{\bf A generic transformer for weighted distributed protocols.} We present a generic sub-ID allocation mechanism that enables us to efficiently apply conventional distributed protocols in the weighted setting. Our sub-ID allocation mechanism deterministically decides the number of sub-IDs for each validator according to the weight distribution, such that every sub-ID will be viewed as an individual participant in the subsequent protocol.  

The trivial sub-ID allocation method that precisely preserves the portion of each validator's weight may need to issue tremendously many sub-IDs. In contrast, we have noticed and leveraged a gap between the usual assumption on the honest participant's weight ratio (assumed to be more than 2/3 due to other system components) and the honest ratio needed in threshold cryptography (usually just above 1/2). Particularly, our sub-ID allocation method is lossy-yet-qualified, guaranteeing that more than half of the sub-IDs will be issued to honest participants if they possess over 2/3 of the weights, and therefore it can be much more \textit{compact}. The number of sub-IDs issued by our method for $n$ validators is probably at most $2 n$, regardless of the weight distribution. For the real-world weight distributions of blockchain validators, the issued sub-IDs can even be fewer. For example, our method gives only 1688 sub-IDs instead of 674 trillion to the 3700 Filecoin validators. Compared with concurrent work in \cite{abs-2307-15561}, ours issues fewer sub-IDs for large validator sets like Filecoin's. More details can be found in Sect.\ref{sec:subid}.

\smallskip\noindent{\bf Implementation and Evaluation.} We implement our protocol in Java\footnote{Our code is available at \url{https://github.com/mtc2000/AnyTrustDKG}.}and deploy it on AWS EC2 instances with 16, 32, 64, 128, and 256 nodes. The results demonstrate that our protocol scales effectively and completes within a few seconds (adding some ledger waiting time, which could vary depending on the blockchain if we instantiate the broadcast channel via a distributed ledger.). Additionally, we conduct computational time tests for various values of $n$, ranging from $2^9$ to $2^{15}$. 
In comparison with the state-of-the-art DKG protocol KZG \cite{KateZG10}, our protocol's performance in both the good-case and worst-case scenarios is comparable to or even superior to KZG's performance in the good-case scenario, while KZG's cost in the adversarial case experiences a dramatic increase.
 Notably, for $n = 2^{12} \sim 2^{14}$, a node in our protocol can finish all computation tasks within around $26$ \textasciitilde $181$ seconds, even facing the maximal amount of Byzantine nodes. The total amount of data to be broadcasted is around $7.7$ \textasciitilde $30.6$ MB; If the nodes broadcast the data by posting it on the Filecoin blockchain, it takes around $5$ minutes \textasciitilde $20$ minutes. The experimental results show that our protocols effectively enable massive-scale DKG deployment.

\smallskip\noindent {\em Deployment friendliness:} It is worth noting our DKG is more friendly for large-scale deployment since our DKG makes exclusive use of multicast channels (besides broadcast channels), which can be efficiently implemented, e.g., with gossip protocols and does not require a node to know the IP addresses of all other peers. In contrast, both Pedersen DKG and KZG DKG require pair-wise {\em private} channels for their efficiency claims. While it is not infeasible for large-scale deployment, it does add extra difficulties and overheads, particularly in public blockchain settings.

\smallskip\noindent{\bf Application: better all-hands checkpointing.} We then apply our techniques to realize the checkpointing blueprint Pikachu of Filecoin \cite{AzouviV22} that requires all validators to participate. After our optimized sub-ID allocation, we execute a DKG and a threshold Schnorr signature \cite{Shoup23} among these 1688 sub-IDs to create a checkpoint. With our \dkgname, the DKG phase only incurs around 3MB of broadcast messages in total. Each node can complete all computations in just a few seconds, even when facing the maximum number of complaints. Regarding the threshold signature, we use \dkgname\ again to generate the nonce in the GJKR \cite{GennaroJKR07} signing protocol, resulting in a non-interactive threshold signing protocol (after nonce-generation), eliminating the potential single point of failures in coordinator-based protocols like FROST \cite{KomloG20}.

Compared with the existing checkpointing scheme Babylon \cite{TasTGKMY23}, in which the number of Bitcoin transactions per checkpoint grows linearly to the scale of the blockchain,  ours/Pikachu only requires exactly {\em one} Bitcoin transaction for each checkpoint. This difference is reflected in the monetary cost. As an example with Filecoin, the estimated  Bitcoin transaction fee incurred annually using Babylon would be over six million USD, while only 26,048.8 USD using ours/Pikachu.
More details can be found in Sect.\ref{sect:checkpointframework}.


\section{Technique Overview}\label{sect:tech}
We give a high-level overview of how we leverage various techniques to lead to our DKG protocol. Through the analysis of how DKG usually works, we observe one major reason for the inefficiency (both high communication and high computation costs) is due to the following simple facts: everyone {\em broadcast} shares to everyone, and broadcast channels are expensive!

\medskip\noindent{\bf Starting observation for efficiency: Selecting an any-trust group as VSS dealers.} 
Our starting observation is that letting all participants act as dealers of verifiable secret sharing (VSS) schemes is actually {\em unnecessary}. Recall that in the common DKG paradigm, the final secret is $sk =\sum_{j\in \mathsf{Qual}} s^{(j)}$, where $s^{(j)}$ is the secret dealt by a qualified participant ${P}_j$, and $\mathsf{Qual}\subset [n]$ is the set of all qualified dealers. However, the existence of {\bf one} honest dealer would suffice for both secrecy and robustness. Particularly for secrecy, a uniformly sampled secret $s^{(j)}$ contributed by an honest ${P}_j$ could conceal $sk$ to the adversary who may corrupt all other dealers. For robustness, even when the other dealers behave arbitrarily (e.g., go offline), one honest dealer ensures the set $\mathsf{Qual}$ is non-empty, and thus $sk$ is well-defined.

Therefore, we propose utilizing a small group of representatives, called an "any-trust" group (as introduced in the context of anonymous communication \cite{WolinskyCFJ12}), where we trust at least one member of the group is honest but do not need to know which one to trust. 
Note that such an any-trust group can be obtained by randomly sampling from the whole population (with an honest majority). Notably, the size $s$ of an any-trust group can be as small as a few tens in practice, which is in stark contrast to that of a group with an honest majority, which can be up to thousands. 

If focusing on \textit{static} adversaries who cannot corrupt parties during the protocol execution, the observation alone already leads to an efficient solution. Particularly, before the complaint phase, there are only $s$ commitments in total to broadcast and only $s$ secret shares for each party to verify. In the worst case, each party at most needs to verify $O(sn)$ shares, which is still feasible.

\medskip\noindent{\bf Challenges and techniques for adaptive security.} 
Achieving efficiency while preserving adaptive security is, however, non-trivial as the adversary does have the budget to corrupt the {\em entire} any-trust group.
Indeed, there are multiple difficulties from different layers, and we need different techniques to conquer them.

\smallskip\noindent\underline{\it Preventing the damage of corrupting the entire any-trust group.} We adopt the standard techniques from existing adaptively secure Byzantine agreement protocols \cite{GiladHMVZ17,DavidGKR18} to prevent the damage of entire any-trust group corruption. Particularly, we use the following techniques or assumptions.
 
 \smallskip\noindent{\it VRF-based sortition.} We use the verifiable random function (VRF) based sortition \cite{GiladHMVZ17} to select the any-trust group, such that only a party itself knows whether it has been selected, which prevents an adaptive adversary from targeting the group before the group members send out their first messages. 

  \smallskip\noindent{\it Memory erasure (assumption).} Once the any-trust group of dealers sends out messages, the adversary will be aware of their identities and proceed to corrupt them. We, therefore, require all these dealers to \textit{erase} all internal states related to dealing secrets (but not the long-term secret keys for signing and decryption) at the same time they send messages. Consequently, even when the adversary corrupts them, it cannot learn the secrets dealt by them.
  
  \smallskip\noindent {\it Forward-secure signatures.} However, the adversary can still violate the robustness and secrecy by sending different messages on behalf of newly corrupted dealers. In this case, an initially honest dealer may be disqualified by the network due to the disturbing messages sent by the adversary. To prevent such an attack, we apply forward-secure signatures \cite{DavidGKR18}, which ensures no further valid messages can be generated in this round after the dealer erases its secret states.

\smallskip\noindent\underline{\it Efficiently deciding the qualified dealer set with silent dealers.} Recall that in the conventional DKG schemes, including KZG, a dealer shall repudiate complaints (that he was silent) by broadcasting the corresponding shares for public verification. However, in our construction,  all dealers may be corrupted after the dealing phase. 
Also, for security, they have already erased all internal states and cannot repudiate anyway. We must enable the network to decide on the qualified set of dealers while the dealers remain silent. 

We tackle the problem by designing \textit{publicly verifiable complaints}, such that a dealer can be disqualified immediately once such a complaint against it has been presented, without the need for further repudiation from the dealer. There are two types of complaints: (1) the dealer does not send anything to the receiver. (2) the dealer sent an invalid share to the receiver. To make type (1) public verifiable, we let each dealer in the deal phase \textit{broadcast} the vector of {\em all} \textit{encrypted} shares under the receivers' public keys, such that everyone can check the existence of ciphertexts. \footnote{We remark that one can broadcast the vector of shares at a marginal cost increase compared to broadcasting one share by leveraging the effective broadcast extension trick, such as \cite{Nayak0SVX20}. 
} For type (2), we leverage {\em verifiable decryption}: if the decrypted share is invalid, the receiver can generate a NIZK proof showing the share is the correct decryption, which, together with the share itself, serves as a publicly verifiable complaint. 

\smallskip\noindent{\it Why not using PVSS?} We note that a publicly verifiable secret sharing (PVSS) scheme may look suitable for the setting with silent dealers, as the qualified set can be determined without the complaint phase. However, as we discussed before, existing PVSS schemes that produce field-element secrets are not adaptively secure. Moreover, our approach enables significantly better performance, due to the following reasons: (1) our approach incurs asymptotically lower verification cost for each node, as one verifiable complaint is sufficient to disqualify a malicious dealer, which means that a node  needs to verify correctness proofs for at most $O(s+n)$ decrypted values; In contrast, in a PVSS-based approach, a node has to verify the encryption correctness proofs for all $O(sn)$ encrypted shares. (2) Proving and verifying decryption correctness can be pratically more efficient. Our scheme can be instantiated with standard ElGamal encryption and Schnorr proof, while proving the validity of encrypted shares usually requires a special encryption scheme (e.g., Paillier \cite{LindellN18}, Lattice-based \cite{GentryHL22}) and a range proof \cite{GentryHL22}, which are considerably more expensive.  (3) In our approach a node is not required to prove or verify decryption corretness when there is no complaint, while proving and verifying encryption correteness are always mandatory in PVSS-based approaches.

\smallskip\noindent\underline{\it Simulating encrypted shares in the face of adaptive corruption.} Now an honest (selected) dealer needs to broadcast the sequence of encrypted shares $(\mathsf{Enc}(ek_1, f(1)), \mathsf{Enc}(ek_2, f(2)), \dots, \mathsf{Enc}(ek_n, f(n)))$, where $ek_i$ is the public encryption key of the party $P_i$, and $f$ is the secret polynomial such that $f(0)$ defines his secret. When an adversary corrupts $P_i$, it knows the decryption key $dk_i$ and thus the decrypted share $f(i)$. However, in the security proof, a simulator should not know $f(0)$ and all $f(i)$'s, while it needs to generate all ciphertexts to simulate an honest dealer. Under adaptive corruptions, we essentially need a non-committing public key encryption scheme \cite{BrunettaHS24}, which enables the simulator to generate valid ciphertexts without knowing the plaintexts and later open the ciphertext to an arbitrary value. However, general non-committing encryption is impossible in the standard model, unless the secret key is unresonably long \cite{Nielsen02, BrunettaHS24}. We may employ a public key encryption scheme in the random oracle model to circumvent this difficulty \footnote{Now we can see the choice that we do not prove the validity of encrypted shares is critical, as otherwise, we may not use random-oracle model PKE.}.
Particularly, let us think about the hybrid ElGamal encryption: we have $ek = g^x \in \mathbb{G}$, $dk = x \in \mathbb{Z}_p$, and the ciphertext $c$ in the form of $(g^r, \mathsf{Hash}(ek^r)\oplus m)$, where $g\in \mathbb{G}$ is the generator of the group $\mathbb{G}$ of prime order $p$, $r\in \mathbb{Z}_p$ is the fresh encryption randomness, $m$ is the plaintext, $\mathsf{Hash}$ is a hash function modeled as a random oracle, and $\oplus$ is the XOR operation on the message space (assuming binary encoded for simplicity). Then, the simulator could first generate a ciphertext as $(g^r, u)$, where $u$ is uniformly sampled from the plaintext space. Later, when the plaintext $m$ is known, the simulator programs the random oracle such that $\mathsf{Hash}(ek^r) = u \oplus m $, which opens the ciphertext to $m$.

\smallskip\noindent\underline{\it Preventing leakages due to publicly verifiable complaints.} We observe that our publicly verifiable complaints expose the decrypted results to the public, which, in some sense, provides a \textit{decryption oracle} and can potentially be leveraged by malicious nodes to break the confidentiality of the encryption scheme. We can patch this issue by employing a chose-ciphertext-attack (CCA) secure encryption scheme. However, as we already have many other requirements for the encryption scheme, we must be careful to ensure all requirements are compatible. For example, the encryption scheme must be non-committing and require programmable random oracles, which cannot coexist with standard CCA approaches like Naor-Yung \cite{NaorY90}. Meanwhile, our complaint phase needs efficient proof of decryption, which means the ciphertext must preserve some structures to enable efficient proof systems.

We use a signature of knowledge \cite{ChaseL06} to handle this issue. Specifically, for the hybrid ElGamal encryption whose ciphertexts are in the form of $(c_0 = g^r, c_1 = \mathsf{Hash}(ek^r)\oplus m)$, we require the dealer $P_i$ who produces $(c_0, c_1)$ to sign its ID $i$ using the knowledge of $r$
against $c_0 = g^r$. Then, in the security proof, the simulator could \textit{extract} $r$ from the signature of knowledge, which enables the simulator to know the encrypted share without the help of a decryption oracle. We will see other benefits of this approach when we detail the concrete encryption scheme. 

\smallskip\noindent{\bf Further optimizations to DKG}: After overcoming the difficulties of adaptive security, we turn back to optimizing the performance.

\smallskip\noindent\underline{\em Reducing communication of complaints by any-trust group again.} A straightforward complaint phase is to let all nodes directly broadcast their (verifiable) complaints to the network. In practice, it means there could be $O(n)$ broadcast again, which can incur an unpleasant overhead.
In addition, the cost cannot be reduced by our extended broadcast channel techniques either. Our broadcast technique enables one node to broadcast large-size messages, but now there are many senders.

We optimize the complaint phase via the following observation: one valid complaint is enough to disqualify a dealer, and thus, there is no need to include all complaints in the broadcast channel. We, therefore, design a complaint phase with the following three steps. First, each node disseminates the complaints using a multicast channel so that all nodes receive all complaints made by all other honest nodes. Second, we sample an any-trust group again, and let the group members deduplicate the complaints. Each group member will maintain a concise complaint list that contains at most one complaint for each dealer and all dealers complained by honest nodes. Finally, we let the group members broadcast their complaint lists, which guarantees that all malicious dealers will be disqualified. With the optimized complaint phase, there are $O(s)$ any-trust group members, each posting at most $O(s)$ complaints, where $s$ is the size of an any-trust group.

\smallskip\noindent\underline{\textit{On the choice of VSS/polynomial commitments.}} While the VSS scheme in KZG DKG \cite{KateZG10} is usually believed to be the most efficient instantiation, we do not use it in our DKG scheme due to the following considerations: (1) The polynomial commitment scheme in KZG VSS necessitates a structural common reference string, and securely establishing it in decentralized applications requires additional efforts.
(2) The communication benefits of the VSS scheme do not exist in our setting. Although its commitment size is constant, we need to broadcast all encrypted shares (and their encrypted proofs) anyway. 
(3) The generated public key is in a pairing-friendly group. We need to make an extra effort to adapt it for Schnorr signatures.
 
 Instead, we employ a VSS scheme based on a more classical polynomial commitment. Specifically, the commitment to a $t$-degree polynomial $f$ is the form of $g^{f(0)}, g^{f(1)}, \dots, g^{f(n)}$, where $n>t$ is the number shares needed to distribute. By the checking technique from Scrape \cite{CascudoD17}, a receiver could verify the $n+1$ group elements committing to a $t$-dgree polynomial at the cost of $O(n)$ group operations. Then, to verify each share $f(i)$, one just needs to perform one group exponentiation operation, such that verifying $O(n)$ shares from the dealer just costs $O(n)$ group operations, which guarantees a computationally efficient complaint phase.

 \smallskip\noindent\underline{\textit{Using multi-recipient encryption.}} For the PKE scheme, we have proposed the hybrid version of ElGamal, which is \textit{non-committing} and supports verifiable decryption. In our DKG, we use the multi-recipient variant of it \cite{BellareBKS07}, which reuses the $g^r$ component across ciphertexts under different public keys. It greatly reduces the broadcast cost, making the ratio of ciphertext size and share size close to 1. Moreover, recall that we use proof of knowledge of $r$ to prevent leakages from decryption oracles, and using multi-recipient encryption will only incur one proof of knowledge by each dealer.

\smallskip\noindent{\bf Optimization to broadcast channels: A practical extension trick.} 
 Two primary approaches for broadcast channels include using Byzantine broadcast (BB) protocols \cite{GiladHMVZ17, ChenM19, DolevR82} or utilizing existing infrastructure like blockchains. Implementing a large-scale BB protocol can be intricate and susceptible to errors; thus, using established blockchains is an attractive, simpler, and modular alternative. However, on-chain storage is generally an expensive and scarce resource. While the broadcast cost in our DKG for thousands of participants has been reduced to a few Megabytes, it can still be a considerable burden for blockchains.
 
 Therefore, we present a practical extension to a blockchain-based broadcast channel by leveraging a multicast channel and a data dispersal network (DNN) like IPFS \cite{TrautweinRTCSSG22}. Our design is simple and modular, retaining the major benefits of using blockchain, and it enables a sender to broadcast an {\it arbitrarily long} message while incurring \textit{constant} on-chain storage cost. Though it may be folklore to write digests alone into a blockchain to save bandwidth, we are unaware of any design with a formal agreement guarantee. We believe this component may be of independent interest. More details are in Sect.\ref{sec:bc}

\section{Model and Goal}\label{sect:pre}
\smallskip\noindent{\bf Communication model. } We assume the network is synchronous, and protocols proceed by rounds. Every participant has access to multicast and broadcast channels with different guaranteed delivery time. They both achieve \textit{validity}, while broadcast channel additionally guarantees \textit{agreement}.

\noindent\underline{\it Validity.} When an honest node sends a message via this channel, all honest nodes can receive this message by the end of the round.

\noindent\underline{\it Agreement.} At the end of a broadcast round, honest receivers always receive the same message from this channel, even when the sender is Byzantine.

\smallskip\noindent{\bf Adversarial model.}  Prior to protocol execution, every node honestly generates their public key/secret key pairs and sends public keys to all other nodes. After the setup, the adversary can adaptively corrupt any node during the protocol execution and control their subsequent behaviors. Particularly, the adversary controls what messages a corrupted node will send in the same round it gets corrupted. However, messages already multicasted or broadcasted by node $i$ before $i$ become corrupted \textit{cannot be retracted}. 

\smallskip\noindent{\bf Notations and assumptions.} Throughout the paper: We use $\secpar$ to represent the security parameter. The notation $[i,n]$ represents the set $\{i,i+1,\cdots,n \}$, where $i$ and $n$ are integers with $i<n$. We might abbreviate $[1,n]$ simply as $[n]$. For a set $\{x_1, x_2, \dots, x_n\}$ and a sequence $(x_1, x_2,\dots, x_n)$, we may abbreviate them as $\{x_i\}_{i\in[n]}$ and $(x_i)_{i\in[n]}$, respectively.  A function $f(n)$ is deemed negligible in $n$, denoted by $f(n) \leq \text{negl}(n)$, if for every positive integer $c$, there exists an $n_0$ such that for all $n > n_0$, $f(n) < n^{-c}$. For a set $\mathbb{X}$, the notation $x \sample \mathbb{X}$ signifies sampling $x$ uniformly from $\mathbb{X}$. 
 We use $y \leftarrow A(x_1,x_2,\cdots)$ to represent running $A$ with inputs $x_1,x_2,\cdots$ and uniform randomness and outputting $y$.
Adversaries are assumed to be probabilistic polynomial time (PPT).

\smallskip\noindent{\bf Distributed Key Generation (DKG):} An $(n,t)$-DKG for DLog-based cryptography is an interactive protocol involving $n$ parties. At the end of execution, all honest parties possess a common public key $pk \in \mathbb{G}$ and a list of public key shares $(pk_1, \dots, pk_n)$, while each of them holds a secret share $sk_i \in \mathbb{Z}_p$. This setup allows any subset of $t+1$ honest parties to reconstruct the secret key $sk$ of $pk$. 


We follow the \textit{oracle-aided algebraic simulatability}, which was recently proposed by Bacho and Loss \cite{BachoL22} for capturing the adaptive security of many practical DKG schemes. This definition focuses on \textit{algebraic} adversaries.

\begin{definition}[Algebraic Algorithm]
    An algorithm $\mathsf{A}$ is called algebraic over a group $\mathbb{G}$ if all group element $\zeta \in \mathbb{G}$ that $\mathsf{A}$ outputs, it additionally outputs a vector $\vec{z} = \{z_0, \dots, z_m \}$ of integers in $\mathbb{Z}_p$ such that $\zeta = \prod_i g_i^{z_i}$, where $(g_1, \dots, g_m)$ is the list of group elements that $\mathsf{A}$ has received so far. 
\end{definition}

\begin{definition}\label{def:oracleaided}
    Let $\Pi$ be a protocol among $n$ parties $P_1, {P}_2, \dots, {P}_n$ where ${P}_i$ outputs a secret key share $sk_i$, a vector of public key shares $(pk_1, \dots, pk_n)$, and a public key $pk$. $\Pi$ is a secure DKG for a DL cryptosystem over a group $\mathbb{G}$ of a prime order $p$ if it satisfies the following properties.
\end{definition}
\begin{itemize}
    \item \textbf{Consistency:} $\Pi$ is $t$-consistent if although at most $t$ parties have been corrupted, the honest parties can output the same public key $pk$ and the same vector of public key shares $(pk_1, \dots, pk_n)$.
    \item \textbf{Correctness:} $\Pi$ is $t$-correct, if despite that at most $t$ parties have been corrupted, there is a $t$-degree polynomial $f(x)\in \mathbb{Z}_p[X]$, such that for every $i \in [n]$, $pk_i = g^{f(i)}$, every honest $\mathcal{P}_i$ has $sk_i = f(i)$, and the public key is $pk = g^{f(0)}$.
    \item \textbf{Oracle-aided Algebraic Simulatability:} $\Pi$ has $(t, k, T_\adv, T_{\mathsf{sim}})$-oracle-aided algebraic simulatability if for every adversary $\adv$ that runs in time at most $T_\adv$ and corrupts at most $t$ parties, there exists an algebraic simulator $\mathsf{Sim}$ that runs in time at most $T_{\mathsf{sim}}$, makes $k-1$ queries to oracle $\mathsf{DL}_g(\cdot)$ and satisfies the following properties:
    \subitem On input $\zeta = (g^{z_1}, \dots, g^{z_k})$, $\mathsf{Sim}$ simulates the role of the honest participants in an execution of $\Pi$. Upon an honest party $\mathcal{P}_i$ being corrupted, the simulator needs to return the internal state of $\mathcal{P}_i$ to the adversary.
    \subitem On input $\zeta = (g^{z_1}, \dots, g^{z_k})$, let $g_i$ denote the $i$-th query by $\mathsf{Sim}$ to $\mathsf{DL}_g(\cdot)$. Let $(\hat{a}_i, a_{i,1}, \dots, a_{i,k})$ be the corresponding algebraic coefficients of $g_i$, \textit{i.e.}, $g_i = g^{\hat{a}_i} \prod_{j=1}^{k}(g^{z_j})^{a_{i,j}}$ and set $(\hat{a}, a_{0,1}, \dots, a_{0,k})$ as the algebraic coefficients of $pk$. Then, the following matrix over $\mathbb{Z}_p$ is invertible
    \begin{equation*}
        L := \left(
        \begin{aligned}
            &a_{0,1}& a_{0,2} & \cdots & a_{0,k} \\
            &a_{1,1}& a_{1,2} & \cdots & a_{1,k} \\
            &\vdots & \vdots &  & \vdots \\
            &a_{k-1,1}& a_{k-1,2} & \cdots & a_{k-1,k} \\
        \end{aligned}
        \right).
    \end{equation*}
    Whenever $\mathsf{Sim}$ completes a simulation of an execution of $\Pi$, we call $L$ the simulatability matrix of $\mathsf{Sim}$.
    \subitem Denote by $\mathsf{view}_{\adv, y, \Pi}$ the view of $\adv$ in an execution of $\Pi$ conditioned on all honest parties outputting $pk =y$. Denote by $\mathsf{view}_{\adv, \zeta, y, \mathsf{Sim}}$ the view of $\adv$ when interacting with $\mathsf{Sim}$ on input $\zeta$, conditioned on $\mathsf{Sim}$ outputting $pk = y$. Then, for all $y$ and all $\zeta$, $\mathsf{view}_{\adv, y, \Pi}$ and $\mathsf{view}_{\adv, y, \Pi}$ are computationally indistinguishable.
\end{itemize}
Note that the adversary $\adv$ does not have to be fully algebraic. Instead, being algebraic related to $pk$ and queries $\mathsf{DL}_g(\cdot)$ would suffice, as discussed in \cite{BachoL22}.

Additionally, we consider ``key-expressibility'', as introduced in \cite{GurkanJMMST21}, against static attackers. This property is suitable for instantiating the key generation of re-keyable primitives like BLS and ElGamal. It also works with Schnorr signature as recently shown in \cite{GurkanJMMST21, Shoup23}. Formal definitions are provided in Appendix \ref{app:dkgdef}.

\begin{definition}[Key-expressability \cite{GurkanJMMST21}]\label{def:ke}
    A DKG protocol is key-expressable if for every static PPT adversary $\adv$ that corrupts up to $t$ nodes, there exists a PPT simulator $\mathsf{Sim}$, such that on input of a uniformly random element $pk'\in \mathbb{G}$, produces $\alpha \mathbb{Z}_p$, $sk_1\in\mathbb{Z}_p$, $pk_1 = g^{sk_1}\in \mathbb{G}$, and a view which is indistinguishable from $\adv$'s view from a run of the DKG protocol that ends with $pk = pk'^{\alpha} \cdot pk_1$.
\end{definition}


\section{Preliminaries}\label{app:pre}

\noindent{\bf Verifiable Random Function.}
A verifiable random function (VRF) is a pseudorandom function whose outputs can be publicly verified using the evaluator's public key. Throughout this paper, we take the DDH-based VRF scheme from \cite{GoldbergNPR16} as our instantiation. A VRF scheme consists of three algorithms: (1) $\mathsf{KeyGen}(\secparam)$ generates a verification key $vk$ and the secret evaluation key $sk$; (2) $\mathsf{Eval}(vk, sk, x)$ evaluates the function with $sk$ on the input $x$, and outputs $y$ along with a proof $\pi$. (3) $\mathsf{Verify}(vk, x,y, \pi)$ verifies if $y$ is the honest evaluation output on $x$ with the secret key of $vk$. 

A secure VRF satisfies (1)\textit{Pseudorandomness}: the function values are pseudorandom, even given the public key and the proofs; (2) \textit{Completeness}: it always holds that $\mathsf{Verify}(vk, x, \mathsf{Eval}(vk,sk,x)) =1$; and (3) \textit{Uniqueness}: it is infeasible to generate a public key $vk$, an input $x$, and two different $(y_1, \pi_1)$ and $(y_2, \pi_2)$, such that $$\mathsf{Verify}(vk, x, y_1, \pi_1) = \mathsf{Verify}(vk, x, y_2, \pi_2)=1.$$ (4) \textit{Unpredictability under malicious key generation}: if the input $x$ has enough entropy (i.e., cannot be predicted), then the correct output $y$ is indistinguishable from a uniformly random value, no matter how the VRF keys are generated. Formal definitions can be found in \cite{MicaliRV99, BadertscherGKRZ18}.

\smallskip\noindent\underline{\it VRF-based sortition.} We introduce the standard VRF-based sortition scheme below and will use it as a black box in our DKG protocol. (1) $\mathsf{Setup}(\secparam).$ Each user generates their VRF key pair $(vk, sk)$ and publishes $vk$. A public randomness $\textsf{rand}$ is sampled independent of the key generation. 
(2) $\mathsf{Sortition}(vk, sk, \textsf{rand}, \textsf{event}, \textsf{ratio})$. A user with $(vk ,sk)$ evaluates the VRF on the input of $(\textsf{rand}\|\textsf{event})$ and obtains $y$ and a proof $\pi$. It checks if $\frac{y}{\textsf{max}} \leq \textsf{ratio}$. If failed, abort. Otherwise, return $(y,\pi)$ as the credential of being selected.
(3) $\mathsf{Vrfy}(vk, \textsf{rand},\textsf{ratio}, \textsf{event}, \textsf{credential}).$ It verifies the credential by validating the VRF output $y$ and checking if $\frac{y}{\textsf{max}} \leq \textsf{ratio}$.

In above, $\textsf{ratio}$ denotes the ratio of the expected committee size to the whole group size, and the expected committee size is determined by the expected ratio of honest nodes to the committee. 

\noindent{\it Security of VRF-based sortition.} It is easy to argue when the underlying VRF satisfies the security properties defined above, and $\textsf{rand}$ is sampled independently of the key generation, the VRF-based sortition outcome is computationally indistinguishable from the outcome of a process where each user is elected with an independent probability of $\textsf{ratio}$. Our further analysis is based on this fact.

\smallskip\noindent{\bf Forward-secure digital signature.} A forward-secure signature scheme $\mathsf{FS}.\Sigma$ consists of four algorithms: (1) $\mathsf{Gen}(\secparam)\rightarrow (\mathsf{FS}.vk, \\\mathsf{FS}.sk[1])$ generates a verification key and the initial signing key;\\(2)$\mathsf{Update}(\mathsf{FS}.sk[i])\rightarrow \mathsf{FS}.sk[i+1]$ updates the signing key at round $i$ to the signing key at round $i+1$; (3) $\mathsf{Sign}(\mathsf{FS}.sk[i], m)\rightarrow \sigma$ generates a signature $\sigma$ for the message $m$; (4)  $\mathsf{Vrfy}(\mathsf{FS}.vk, i, \sigma, m)\rightarrow b$ determines if $\sigma$ is a valid signature for $m$ created by the signing key at round $i$.  

A forward-secure signature scheme guarantees the unforgeability of signatures at rounds $i< i^*$, even when the adversary has access to signing oracles at any round and corrupts the signing key at the $i^*$-th round. The formal security definitions and secure instantiations are available in \cite{ItkisR01}.

\smallskip\noindent{\bf Multi-recipient encryption.} We use the multi-recipient hybrid ElGamal encryption.
Let $g$ be a generator of $\mathbb{G}$, and let $\mathsf{Hash}$ be a hash function whose output space is the message space (which we assume is binary encoded and $\oplus$ is the XOR operation). The encryption scheme can be described as follows: (1) $\mathsf{Gen}(\secparam)$ outputs $(ek = g^x, dk = x)$, where $x\sample \mathbb{Z}_p$. (2) $\mathsf{MREnc}(ek_1, \dots, ek_n, m_1, \dots, m_n)$ outputs the ciphertexts $(c_0, c_1, \dots, c_n)$, where $c_0 = g^r$ for some uniformly sampled $r\in \mathbb{Z}_p$, $c_i = \mathsf{Hash}(ek_i^r) \oplus m_i$ for $i \in [n]$. (3) $\mathsf{Dec}(ek_i, dk_i, c_0, c_i)$ outputs $m = \mathsf{Hash}(c_0^{dk_i})\oplus c_i$.

We use the above algorithms in our DKG construction, but we directly reduce our DKG to the underlying DDH assumption without going through the security abstraction of the encryption scheme. This is because the security properties we need from the encryption are non-standard (as we discussed in the technique overview), and we would like to avoid further distractions.

\smallskip\noindent{\bf Signature of Knowledge.} A signature of knowledge (SoK) scheme is defined w.r.t. an NP relation $R$. It can be either in the common reference string model or in the random oracle model. We focus on the random-oracle-model instantiations and thus omit the algorithm for generating the common reference string. 
A user with the witness $x$ of a public statement $y$ such that $(y,x)\in R$ can sign any message $m$ via the signer algorithm $\mathsf{SoK.Sign}(y, x, m)\rightarrow \sigma$. Later, another user can verify if $m$ was signed by someone with the knowledge of the witness w.r.t. $y$ via the verifier algorithm. 

A secure SoK scheme satisfies the following properties: (1) \textit{Simulatability}: there is an efficient simulator algorithm that can produce a valid signature under any statement $y$ without using the witness $x$, and the produced signatures are indistinguishable from honestly generated signatures. (2) \textit{Extractability}: There is an efficient extractor algorithm that can extract the witness $x$ from a valid signature produced by the adversary under a statement $y$, even when the adversary has seen some simulated signatures. In the random oracle model, both the simulator and extractor are allowed to program the random oracle. Formal definitions are available in \cite{ChaseL06}.

In this paper, we use an SoK in the random oracle model for the DLog relation, \textit{i.e.}, for $y\in \mathbb{G}$ and $x\in \mathbb{Z}_p$, we have $(y,x)\in R$ iff $y = g^x$. Note that such an SoK is well-studied and can be instantiated with Schnorr signature scheme. 

\smallskip\noindent{\bf NIZK.} The proof of decryption used in our DKG is a NIZK proof system for the decryption correctness. In the random oracle model, a NIZK for an NP relation $R$ consists of a prover algorithm $\mathsf{Prove}$, which on inputs a statement $y$ and its witness $x$ outputs a proof $\pi$, and a verifier algorithm $\mathsf{Vrfy}$ which validates the proof against the statement $y$.

In this work, we require the NIZK to satisfy the following properties: (1) \textit{Completeness}: For every $(y, x)\in R$, it holds that $\mathsf{Vrfy}(y, \\ \mathsf{Prove}(y, x)) =1$. (2) \textit{Zero-knowledgeness}: There exists an efficient simulator algorithm that can produce a valid proof for any statement $y$ without knowing $x$, and the simulated proofs are indistinguishable from honestly generated proofs. (3) \textit{Simulation soundness.} For a statement $y$, if there is no witness $x$ s.t. $(y,x)\in R$, then it is infeasible for an efficient adversary to produce a valid proof for $y$, even when the adversary has seen simulated proofs. Formal definitions are available in \cite{SantisCOPS01}.

We use a NIZK for proof of decryption correctness w.r.t. the encryption scheme. It consists of a prover algorithm, $\mathsf{PKE.Prove}$, which on inputs $(c_0, c_i, ek_i,dk_i, m)$ produces a proof $\Gamma$, and a verifier algorithm $\mathsf{PKE.Vrfy}(c_0, c_i, ek_i, m, \Gamma)$ which checks whether $m$ is the correct decryption from $(c_0, c_i)$. Particularly, $\Gamma =(m, c_0^{dk_i}, \pi)$. Here $\pi$ demonstrates the discrete logarithm of $c_0^{dk_i}$ w.r.t $c_0$ is equal to that of $ek_i$ w.r.t. $g$, which is commonly known as DLEQ proof (equality of discrete logarithms) \cite{ChaumP92}.

\smallskip\noindent{\bf Scrape's polynomial commitment.} We use the polynomial commitment scheme from Scrape \cite{CascudoD17}. Particularly, let $g$ be the generator of $\mathbb{G}$ of prime order $p$, let $f$ be a $t$-degree polynomial over $\mathbb{Z}_p$, and let $n>t$ be an integer. Then, the commitment to the polynomial $f$ is 
$$
    (\mathsf{cm}_0 = g^{f(0)}, \mathsf{cm}_1 = g^{f(1)},\dots, \mathsf{cm}_n = g^{f(n)}).
$$

One can check whether these group elements commit to a $t$-degree polynomial by performing the following steps: (1) Sample an $(n-t)$ -degree polynomial $q(X)\in \mathbb{Z}_p[x]$, and compute the dual code: $ \mathsf{cm}_\tau^{\perp} = \frac{q(\tau)}{\prod_{j =0, j\ne \tau}^{n} (\tau -j )}, \forall \tau\in[0,n]$. (2) Check whether $\prod_{\tau =0}^{n} (\mathsf{cm}_{\tau})^{\mathsf{cm}_\tau^\perp} = \mathbf{1}_{\mathbb{G}}$, where $\mathbf{1}_{\mathbb{G}}$ is the identity element of $\mathbb{G}$.

It is worth noting that the first step can be reused to check different commitments. The effectiveness of the checking process is determined by the following lemma.
\begin{lemma}[\cite{CascudoD17}]
    For any $(\mathsf{cm}_0, \mathsf{cm}_1,\dots, \mathsf{cm}_n)\in \mathbb{G}^{n+1}$, if $$\prod_{\tau =0}^{n} (\mathsf{cm}_{\tau})^{\mathsf{cm}_\tau^\perp} = \mathbf{1}_{\mathbb{G}},$$ then with an overwhelming probability there exists a $t$-degree polynomial $f$, such that $cm_i = g^{f(i)}$ for $i \in [0,n]$.
\end{lemma}
After that, a share $f(i)$ can be validated by checking whether $g^{f(i)}$ equals $\mathsf{cm}_i$.


\section{Our DKG Protocol}\label{sect:dkg}
 Following the technique overview in Sect.\ref{sect:tech}, we present our \dkgname\ in Sect.\ref{sect:dkgconstruction} based on the building blocks in Sect.\ref*{app:pre}, and analyze it in Sect.\ref{sect:dkganalysis}.

\begin{figure*}[!htb]
    \begin{pchstack}[boxed,center]
        \begin{pcvstack}
            \procedure[linenumbering]{$\textbf{Round 1 (broadcast): }\text{each } P_i \pcdo$:}{%
            \text{\color{blue} // determine whether it is elected as a dealer.}\\
                \mathsf{VRF.Sortition}(rvk_i, rsk_i, \textsf{rand}, \text{``deal"}, \textsf{ratio})\rightarrow \mathsf{CR}_i^{\mathsf{deal}}\\
                \pcif \mathsf{CR}_i^{\mathsf{deal}} = \perp, \text{\color{blue} // if not, update FS secret key and exit the round}\\
                \pcthen \mathsf{FS.Update}(\mathsf{FS}.sk_i[1])\rightarrow \mathsf{FS}.sk_i[2], \textbf{erase } \mathsf{FS}.sk_i[1], \textbf{exit } \textbf{Round 1}\\
                \text{\color{blue} // only elected users continue the followings.}\\
                \text{sample }(a_0, a_1, \dots, a_t)\sample \mathbb{Z}_p^{t+1}, \text{ define }f(X)= \sum_{\tau =0}^ta_\tau X^{\tau}\\
                \text{commit to the random polynomial $f(X)$: } (\mathsf{cm}_j = g^{f(j)})_{j\in[0,n]} \\   
                \text{encrypt shares: }\mathsf{PKE.MREnc}((ek_i)_{i\in [n]}, (f(i))_{i\in [n]})\rightarrow (c_0,\dots, c_n)\\
                \text{sign the ID $i$ using the knowledge of $r$: } \mathsf{SoK.Sign}(c_0,r, i)\rightarrow \sigma_{\mathsf{DL}}\\
                \text{denote } \mathsf{trans}_i[1]\leftarrow (\mathsf{CR}_i^{\mathsf{deal}}, (\mathsf{cm}_j)_{j\in [n]}, (c_j)_{j\in [n]}, \sigma_{\mathsf{DL}})\\
                \text{sign the transcript using FS: } \mathsf{FS.Sign}(\mathsf{FS}.sk_i[1], \mathsf{trans}_i[1])\rightarrow \sigma_i\\
                \text{Updat the secret key of FS: }\mathsf{FS.Update}(\mathsf{FS}.sk_i[1])\rightarrow \mathsf{FS}.sk_i[2]\\
                \textbf{erase } \mathsf{FS}.sk_i[1], f(X), (f(i))_{i\in[0,n]}, \text{ and encryption randomness}\\
                \textbf{broadcast } (i, \mathsf{trans}_i[1], \sigma_i[1] )
            }
            \pcvspace
            \procedure[linenumbering]{$\textbf{Round 2 (multicast): }\text{each } P_i \pcdo$:}{%
                \textbf{receive:} \{(j, \mathsf{trans}_{j}[1], \sigma_{j}[1]) \}_{j\in \mathbb{D}}, \text{ for }\mathbb{D}\subset[n]\\
                \text{set }\mathbb{D}_1, \mathbb{D}_2, \mathbb{D}_3, \mathbb{C} = \emptyset\\
                \text{\color{blue} // prepare dual code for verifying polynomial commitments}\\
                \text{sample an } (n-t)\text{-degree polynomial }q(X)\in \mathbb{Z}_p[x], \text{ compute}\\
                \t[1] \mathsf{cm}_\tau^{\perp} = \frac{q(\tau)}{\prod_{j =0, j\ne \tau}^{n} (\tau -j )}, \forall \tau\in[0,n] \\
                \pcfor j\in \mathbb{D}  \text{\color{blue} // verify each broadcast transcript as below}\\
                \t[1] \text{\color{blue} // ignore the transcript if the FS signature is invalid}\\
                \t[1] \pcif \mathsf{FS.Vrfy}(\mathsf{FS}.vk_j, 1, \sigma_j[1], \mathsf{trans}_j[1])  =0, \pcthen \textbf{continue}\\
                \t[1] \text{\color{blue} // verify if it is in a good format and the sortition credential}\\
                \t[1] \pcif \text{parse }\mathsf{trans}_j[1] = (\mathsf{CR}_j^{\mathsf{deal}}, (\mathsf{cm}^{(j)}_{\tau})_{\tau\in [n]}, (c^{(j)}_{\tau})_{\tau\in [n]}, \sigma_{\mathsf{DL}}^{(j)}) \text{ failed}\\
                \t[2] \vee \mathsf{VRF.Vrfy}(rvk_j, \textsf{rand}, \textsf{ratio},``\textsf{deal}",\mathsf{CR}_j^{\mathsf{deal}})=0\\
                \t[2] \text{\color{blue} //check if $(\mathsf{cm}^{(j)}_{\tau})_{\tau\in [n]}$ commits to a $t$-degree polynomial}\\
                \t[2] \vee  \prod_{\tau =0}^{n} (\mathsf{cm}_{\tau}^{(j)})^{\mathsf{cm}_\tau^\perp} \ne \mathbf{1}_{\mathbb{G}} \vee \mathsf{SoK.Vrfy}(c^{(j)}_0, \sigma_{\mathsf{DL}}^{(j)},j) = 0 \\
                \t[1] \pcthen \mathbb{D}_1= \mathbb{D}_1 \cup \{j\}  \text{\color{blue} // if any fails, disqualify $j$ immediately}\\
                \t[1] \pcelseif \mathsf{PKE.Dec}(dk_i, c_i^{(j)})=sk_i^{(j)}\wedge g^{sk_i^{(j)}}\ne \mathsf{cm}_i^{(j)}\\
                \t[1]  \text{\color{blue}//generate a complaint, and update the complaint list}\\
                \t[1] \pcthen \mathsf{PKE.Prove}(c^{(j)}_0, c^{(j)}_i,dk_i,sk_i^{(j)})\rightarrow \Gamma_j, \mathbb{D}_2= \mathbb{D}_2\cup \{(j, \Gamma_j)\}\\
                \t[1]  \text{\color{blue}//otherwise, update the candidate output list}\\
                \t[1] \pcelse \mathbb{D}_3= \mathbb{D}_3\cup\{j\}, \mathbb{C} =\mathbb{C}\cup \{(j,  ((\mathsf{cm}^{(j)}_{\tau})_{\tau\in [0,n]}, sk_i^{(j)}) )\}\\
                \mathbb{D}_2\rightarrow \mathsf{trans}_i[2], \mathsf{FS.Sign}(\mathsf{FS}.sk_i[1], \mathsf{trans}_i[2])\rightarrow \sigma_i[2]\\
                \pcif \mathbb{D}_2\ne \emptyset, \pcthen \textbf{multicast } (i, \mathsf{trans}_i[2], \sigma_i[2])
            }
        \end{pcvstack}
        \pchspace
        \begin{pcvstack}
            \procedure[linenumbering]{$\textbf{Round 3 (broadcast): }\text{each } P_i \pcdo:$}{%
                \textbf{receive }\{(j, \mathsf{trans}_j[2], \sigma_j[2])\}_{j\in\mathbb{R}_1}, \text{ for }\mathbb{R}_1\subset[n]\\
                 \text{\color{blue} // determine whether it is elected for broadcasting complaint list}\\
                \mathsf{VRF.Sortition}(rvk_i, rsk_i, \textsf{rand}, \text{``agree"}, \textsf{ratio})\rightarrow \mathsf{CR}_i^{\mathsf{agree}}\\
                \pcif \mathsf{CR}_i^{\mathsf{agree}} = \perp,  \text{\color{blue} // if not, update FS secret key and exit the round}\\
                \pcthen \mathsf{FS.Update}(\mathsf{FS}.sk_i[1])\rightarrow \mathsf{FS}.sk_i[2], \textbf{exit } \textbf{Round 2}\\
                \text{set } \mathsf{DisQual}, \mathsf{CompList} = \emptyset  \text{\color{blue} // start to deduplicate complaints}\\
                \pcfor j\in \mathbb{R}_1,  \pcif \mathsf{FS.Vrfy}(\mathsf{FS}.vk_j, 1, \sigma_j[2], \mathsf{trans}_j[2])=1\\
                \t \pcthen \pcfor (k, \Gamma_k)\in \mathsf{trans}_j[2]  \text{\color{blue} // check every complaint by $P_j$}\\
                \t[2] \text{\color{blue}// if $D_k$ has not been disqualified, verify the complaint}\\
                \t[2] \pcif k\notin \mathsf{DisQual}\text{, parse }\Gamma_k = (sk_j^{(k)}, \cdot)\\
                \t[2] {\color{blue}// c_0^{(k)}, c_j^{(k)} \text{ are what $P_i$ received at  line 1 of round 2}}\\
                \t[3]  \pcif \mathsf{PKE.Vrfy}(c_0^{(k)}, c_j^{(k)}, \Gamma_k)=1 \wedge g^{sk_j^{(k)}} \ne \mathsf{cm}_j^{(k)}\\
                \t[3] \text{\color{blue} // disqualify the dealer given a valid complaint}\\
                \t[3] \pcthen \mathsf{DisQual}= \mathsf{DisQual}\cup \{k\}\\
                \t[4]\mathsf{CompList} = \mathsf{CompList}\cup \{(k,\Gamma_k)\}\\
                \text{\color{blue} // stop verifying complaints from $j$ if the complaint is invalid.}\\
                \t[3]\pcelse \textbf{break}\\
                ( \mathsf{CR}_i^{\mathsf{agree}},\mathsf{CompList})\rightarrow \mathsf{trans}_i[3]\\
                \text{sign } \mathsf{FS.Sign}(\mathsf{FS}.sk_i[2], \mathsf{trans}_i[3])\rightarrow \sigma_i[3]\\
                \mathsf{FS.Update}(\mathsf{FS}.sk_i[2])\rightarrow \mathsf{FS}.sk_i[3]; 
                \textbf{erase } \mathsf{FS}.sk_i[2]\\
                 \pcif \mathsf{ComList}\ne \emptyset, \pcthen \textbf{broadcast } (i, \mathsf{trans}_i[3], \sigma_i[3])
            }
            \pcvspace
            \procedure[linenumbering]{$\textbf{At the end of Round 3: }\text{each } P_i \pcdo:$}{%
                \textbf{receive }\{(j, \mathsf{trans}_j[3], \sigma_j[3])\}_{j\in\mathbb{R}_2}, \text{ for }\mathbb{R}_2\subset[n]\\
                \text{set } \mathsf{DisQual} = \emptyset\\
                \text{\color{blue}// decide the disqualified set based on broadcast message}\\
                \pcfor j\in \mathbb{R}_2  \\
                \t \text{parse } \mathsf{trans}_j[3] = ( \mathsf{CR}_j^{\mathsf{agree}},\mathsf{CompList})\\
                \t \pcif \mathsf{FS.Vrfy}(\mathsf{FS}.vk_j, 2, \sigma_j[3], \mathsf{trans}_j[3])=1 \\
                \t \wedge \mathsf{VRF.Vrfy}(rvk_j,\textsf{rand}, \textsf{ratio},``\textsf{agree}", \mathsf{CR}_j^{\mathsf{agree}})=1\\
                \t \pcthen \pcfor (k, \Gamma_k)\in\mathsf{CompList}\\
                \t[2] \text{\color{blue}// put a newly complained dealer in the list}\\
                \t[2] \pcif k\notin \mathsf{DisQual} \wedge \mathsf{PKE.Vrfy}(c_0^{(k)}, c_j^{(k)}, \Gamma_k)=1\\
                \t[2] \wedge g^{\Gamma_k.m} \ne \mathsf{cm}_j^{(k)} \\
                \t[2] \pcthen \mathsf{DisQual}= \mathsf{DisQual}\cup \{k\}, \\
                \text{set }\mathsf{Qual} = \mathbb{D}_3\setminus \mathsf{DisQual}\\
                \textbf{output:}\\
                \t pk = \prod_{j\in \mathsf{Qual}} \mathsf{cm}_0^{(j)},  sk_i = \sum_{j\in \mathsf{Qual}} sk_i^{(j)}\\
                \t pk_\tau = \prod_{j\in \mathsf{Qual}} \mathsf{cm}_\tau^{(j)}, \text{ for every } \tau \in [n]
            }
        \end{pcvstack}
    \end{pchstack}
    \caption{The \dkgname\ construction.}\label{fig:anytrustdkg}
\end{figure*}

\subsection{The Construction}\label{sect:dkgconstruction}
The construction is based on building blocks such as $\mathsf{PKE}$ (along with the proof decryption system), the forward-secure signature $\mathsf{FS}$, the VRF-based sortition $\mathsf{VRF}$, and $\mathsf{SoK}$.

\smallskip\noindent {\bf $\textsf{Setup}$.} Given the security parameter $\secpar$, the number of participants $n$, and the corruption bound $t$ (where the adversary can corrupt up to $t$ parties), configure the system as follows:

\smallskip\noindent\underline{\textsc{Group Description: }} Based on the security parameter $\lambda$, determine the group $\mathbb{G}$ of prime-order $p$ and its generator $g$. 

\smallskip\noindent\underline{\textsc{PKI Setup:}} Every participant $P_i$ produces three key pairs: $(ek_i, dk_i)$ for PKE, $(rvk_i, rsk_i)$ for VRF, and $(\mathsf{FS}.vk_i, \mathsf{FS}.sk_i[1])$ for the forward-secure digital signature scheme.   

\smallskip\noindent\underline{\textsc{Random Coin: }} Uniformly select a string $\textsf{rand}\sample \bin^\secpar$ that is independent of all users' public keys.

The setup also determines the value of $\textsf{ratio}$ for VRF-based sortition. Given $n$ participants executing the sortition algorithm with $\textsf{ratio}$, the chosen committee will form an any-trust group. We assume the required configurations for the underlying channels are established during this setup phase.

\smallskip\noindent {\bf $\textsf{Protocol Details}$. } Post-setup, all participants collaboratively run our DKG protocol, detailed in Fig.\ref{fig:anytrustdkg}. The protocol initiates with a broadcast round, transitions to a multicast round, and concludes with a final broadcast. A brief overview of each round is as follows:

\noindent\textbf{--Round 1.} Nodes initially determine if they're selected as dealers. If not, they refresh the secret key and exit the round (lines 1-4). Elected dealers sample a $t$-degree polynomial $f$ to decide secret shares $sk_i = f(i)$, commit to $sk_0, \dots, sk_n$, encrypt shares $sk_1, \dots, sk_n$ using others' encryption keys (lines 6-8), and sign their ID using the knowledge of $r$ w.r.t. $c_0 = g^r$ in the ciphertext (line 9). Dealers then sign the commitments and ciphertexts, update their signing keys, erase secret information, and broadcast the signed commitments and ciphertexts (lines 10-14).

\noindent\textbf{--Round 2.} Nodes receive the broadcasted messages (line 1). For every message, they authenticate the signature (line 8); if failed, they move to the next message. Otherwise, they validate its format, the VRF sortition certificate (lines 10-11), and the signature of knowledge (line 13). Moreover, they ascertain if the committed values match valid coefficients of a $t$-degree polynomial (lines 3-5 and 12-13). If a transcript fails verification, the dealer is instantly disqualified (line 14). Otherwise, they check the decrypted share's validity against the commitments and, if inconsistent, generate a verifiable complaint against the dealer (lines 15-17). All complaints are multicast.

\noindent\textbf{--Round 3.}  Nodes first verify if they are selected as senders (lines 1-4). If so, they collect and verify all complaints (using ciphertexts received from line 1 of round 2), de-duplicate them, and curate a complaint list documenting all complained dealers (lines 6-17). They then sign and broadcast this complaint list (lines 18-21).

\noindent\textbf{--End of Round 3.} Nodes finalize the set of disqualified dealers based on received complaint lists (lines 1-13). Following that, they create the public key (shares) and secret key share by aggregating contributions from qualified dealers (lines 14-16).

\subsection{The Analysis}\label{sect:dkganalysis}
\smallskip\noindent{\bf Computation complexity analysis.} We primarily focus on the computationally intensive operations, particularly the group exponentiation operation $\mathsf{EXP}$, and will omit inexpensive operations like multiplication operation and hash evaluation. In Round 1, a node who is elected as a dealer needs to generate a commitment (line 6), which takes $O(n) \mathsf{EXP}$, encrypt shares under $n$ different public keys (line 7), which takes $O(n) \mathsf{EXP}$ group expo. In Round 2, to verify a transcript, each node needs to validate its commitment (line 10), which takes $O(n) \mathsf{EXP}$, and there are expected to be $s$ transcripts to verify. If there are the maximal amount of Byzantine nodes, an honest node may need to verify $O(n)$ VRF proofs and $O(n)$ signatures (lines 6, 8), which takes $O(n) \mathsf{EXP}$, and generate $O(s)$ complaints, which takes $O(s) \mathsf{EXP}$.  In Round 3, each node verifies the complaints from other nodes. Note that a node $P_i$ stops verifying complaints from $P_j$ upon finding an invalid complaint made by $P_j$, and it also stops verifying complaints against a dealer once the dealer has verifiably complained. Therefore, a node verifies at most $O(n)$ complaints, which takes $O(n) \mathsf{EXP}$. There are no group exponentiation operations in Round 4. Thus, even facing the maximal amount of Byzantine nodes, each node only needs to perform $O(sn)$ group exponentiation. 

\smallskip\noindent{\bf Communication complexity analysis.} 
In Round 1, an elected node will broadcast $O(n\secpar)$-size transcript, and there are expected to be $s$ elected node, where $\secpar$ is the computational security parameter, and the sizes of a group element, a digital signature, and a VRF credential, are counted as $O(\secpar)$. So the communication cost in Round 1 is $s\mathcal{B}(n \secpar)$, where $\mathcal{B}(\ell)$ denotes the communication cost of broadcasting $\ell$ bits by one sender. There will be no further communication if all nodes behave honestly. Otherwise, in Round 2, each node may need to multicast $O(s)$ complaints, which in total incurs the communication complexity of $n\mathcal{M}(s\secpar)$, where $\mathcal{M}(\ell)$ denotes the communication cost of multicasting $\ell$ bits by one sender. In Round 3, there are $O(s)$ elected nodes each broadcasting $O(s\secpar)$-sized complaints, which incurs the communication complexity of $s\mathcal{B}(s\secpar)$. In summary, the good case communication complexity is $s\mathcal{B}(n\secpar)$, and the adversarial-case communication complexity can be $s\mathcal{B}(n\secpar) + n\mathcal{M}(s\secpar)$.

\smallskip\noindent{\bf Security analysis.} 
The $t$-correctness and $t$-consistency, which ensure all participants at the end of the protocol obtain the same public key and correct shares, follow the facts: (1) there is at least one qualified dealer, and (2) all malicious dealers will be disqualified.
    From the security of VRF-based sortition and the forward-secure signature, at least one honest node will be selected as a dealer, and it can successfully broadcast its valid transcript even if it later becomes corrupted, which ensures (1).
    From the security of the polynomial commitment, if a dealer is not complained by any node, then all honest users receive consistent shares from the dealer. Our complaint phase guarantees that every dealer who is complained by an honest node will disqualified. Therefore, we have (2).
    
    Proving the oracle-aided algebraic simulatability is more involved. At a high level, we need to construct an efficient simulator that, on input from a sequence of group elements, produces an indistinguishable view for an adaptive adversary with the help of a DLog oracle. We follow the techniques from \cite{BachoL22} to simulate the polynomial commitments and opening shares for corrupted nodes. However, broadcasting all encrypted shares in our protocol poses additional challenges for security proof. Particularly, the simulator needs to simulate all ciphertexts without knowing the shares. It also needs to simulate the proof of decryption without using the decryption oracle of the underlying encryption scheme. As sketched in Sect.\ref{sect:tech}, we leverage the non-committing encryption and signature of knowledge to handle these challenges. Formally, 
we establish the following theorem.
\begin{theorem}\label{theo:dkgsecurity}
    The \dkgname\ satisfies $t$-consistency, $t$-correctness, and $(t, k, T_\adv, T_{\mathsf{sim}})$-oracle-aided algebraic simulatability against adaptive adversaries (cf Def.\ref{def:oracleaided}), with $n \geq 2t+1$, $k \leq n(t+1)$ and $T_{\mathsf{sim}} \leq T_\adv + \mathcal{O}(snt)$, under the DDH assumption in the ROM, and assuming the security of the underlying forward-secure signature scheme. For static adversaries, it further achieves the \textit{key-expressability}(cf. Def.\ref{def:ke}).
\end{theorem}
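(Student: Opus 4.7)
The four claims of the theorem split into three tiers of difficulty, and I would handle each in turn.

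\emph{Consistency and correctness.} Both follow almost by construction. For $t$-consistency, the broadcast agreement property ensures all honest parties receive the same Round~1 dealer transcripts and the same Round~3 complaint lists; since the immediately-disqualified set $\mathbb{D}_1$, the candidate set $\mathbb{D}_3$, and the final $\mathsf{DisQual}$ derived from complaints are all computed by deterministic public checks on identical inputs, every honest party produces the same $\mathsf{Qual}$ and therefore the same $pk$, $sk_i$, and $(pk_\tau)$. For $t$-correctness, I would invoke Scrape's dual-code soundness~\cite{CascudoD17}: passing $\prod_\tau (\mathsf{cm}_\tau^{(j)})^{\mathsf{cm}_\tau^\perp} = \mathbf{1}_{\mathbb{G}}$ for a uniformly random $(n-t)$-degree dual polynomial implies, up to negligible probability, that $(\mathsf{cm}_\tau^{(j)})_{\tau\in[0,n]}$ lies on a $t$-degree polynomial $f_j$ in the exponent. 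The DLEQ soundness of the PKE decryption proofs then ensures that a dealer $j \in \mathsf{Qual}$ cannot have mis-decrypted to any honest $P_i$ without a valid complaint appearing in the broadcast $\mathsf{CompList}$; hence $sk_i^{(j)} = f_j(i)$ for every $j \in \mathsf{Qual}$, and $f = \sum_{j \in \mathsf{Qual}} f_j$ certifies correctness.

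\emph{Adaptive oracle-aided algebraic simulatability.} This is the main obstacle, and I would proceed through a sequence of hybrids starting from the real execution. First, I simulate the VRF sortition so that the honest-dealer set $H$ is chosen by the simulator (the independence of $\textsf{rand}$ from the public keys plus ROM programmability of the DDH-based VRF makes this standard). Next, for each $j \in H$ I reprogram the polynomial commitment to be an algebraic embedding $\mathsf{cm}_\tau^{(j)} = g^{\hat{a}_{j,\tau}} \prod_{\ell} (g^{z_\ell})^{a_{j,\tau,\ell}}$, with coefficients chosen so that at the $t$ corrupt indices the shares are scalars the simulator knows (and can therefore encrypt honestly), while at honest indices the shares remain undetermined linear functions of $(z_1, \dots, z_k)$. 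Each honest dealer consumes at most $t+1$ challenge coordinates, which is the source of the bound $k \leq s(t+1)$. I would then switch the ciphertexts sent to honest receivers to random using the ROM programmability of $\mathsf{Hash}(ek_i^r)$ in hybrid ElGamal, a receiver-selective-opening style hop justified by DDH as in~\cite{HeuerJKS15, HuangLCRJ19}. The simulator's $pk$ is $\prod_{j \in \mathsf{Qual}} \mathsf{cm}_0^{(j)}$ and its algebraic coefficients $(\hat{a}, a_{0,1}, \dots, a_{0,k})$ are the sum of the qualified dealers' coefficients.

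\emph{Handling adaptive corruption and matrix invertibility.} The technical heart is producing consistent internal state on corruption. If a dealer $j \in H$ is corrupted before its Round~1 broadcast, the simulator has not yet committed to any embedding for $j$ and samples an honest polynomial on the fly; if $j$ is corrupted afterwards, the explicit erasure of $f_j$ and of $\mathsf{FS}.sk_j[1]$ means only the updated key $\mathsf{FS}.sk_j[2]$ must be surrendered. Here the no-retraction property of the broadcast channel combined with forward-security of $\mathsf{FS}$ is what prevents the adversary from producing a second Round~1 transcript on behalf of $j$, preserving $j$'s contribution to $sk$. For a non-dealer honest $P_i$ corrupted later, the simulator must output $sk_i = \sum_{j \in \mathsf{Qual}} f_j(i)$; for each $j \in H$ this is a known linear combination of $(z_\ell)$, so one $\mathsf{DL}_g(\cdot)$ query per new corruption recovers the needed share, giving at most $k-1$ queries overall. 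Invertibility of the simulatability matrix $L$ is arranged by selecting the coefficient vectors $(a_{j,\tau,\ell})$ so that the relevant $k$ rows form a Vandermonde-like block (one block per honest dealer); a standard linear-algebra argument over $\mathbb{Z}_p$, analogous to~\cite{BachoL22}, then yields invertibility with overwhelming probability.

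\emph{Key-expressability.} Against a static adversary the honest dealer set is fixed in advance. On input a uniformly random $pk' \in \mathbb{G}$, the simulator distinguishes one honest dealer $j^*$ and programs its polynomial so that $\mathsf{cm}_0^{(j^*)} = pk'$ (concretely, choose $t$ arbitrary scalars as the values of $f_{j^*}$ at the $t$ corrupt indices, which together with $pk'$ uniquely determines $(\mathsf{cm}_\tau^{(j^*)})$ in the exponent via Lagrange interpolation). All other honest dealers run the real protocol. Setting $\alpha = 1$, $sk_1 = \sum_{j \in \mathsf{Qual} \setminus \{j^*\}} f_j(0)$, and $pk_1 = g^{sk_1}$ gives $pk = pk'^{\alpha} \cdot pk_1$, and the simulated view is perfectly indistinguishable conditioned on $j^* \in \mathsf{Qual}$, which occurs with overwhelming probability by correctness and consistency.
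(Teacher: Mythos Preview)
Your consistency and key-expressability arguments match the paper's. Two points are missing from your correctness argument: you do not argue that $\mathsf{Qual}$ is nonempty. The paper establishes this separately, combining (i) VRF security so that at least one honest party is elected as dealer w.h.p., (ii) soundness of the DLEQ decryption proof so that no valid complaint can be forged against that dealer's transcript, and (iii) forward-security of $\mathsf{FS}$ together with the no-retraction property of the broadcast channel so that, even if the adversary corrupts that dealer after its Round~1 broadcast, it cannot produce a second signed transcript in the same round. Without this, correctness (and key-expressability, which relies on $j^*\in\mathsf{Qual}$) is not fully established.

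For oracle-aided simulatability your plan has the right ingredients but the mechanism differs from the paper and your $\mathsf{DL}_g$ accounting is off. The paper does \emph{not} arrange the embedding so that shares at already-corrupt indices are known scalars; it sets the coefficients of honest dealer $\mathcal{D}_j$'s polynomial to be the raw challenge block $(z_{(j-1)(t+1)+1},\dots,z_{j(t+1)})$ and makes \emph{every} ciphertext component $c_{j,\tau}$ a uniform string, including those addressed to currently-corrupt receivers. Crucially, the $\mathsf{DL}_g$ oracle is not invoked on corruption events but lazily on random-oracle queries: when the adversary submits $\gamma_{j,\tau}=ek_\tau^{r_j}$ to the RO, the simulator calls $\mathsf{DL}_g(\mathsf{cm}_{j,\tau})$ and programs the hash so $(c_{j,0},c_{j,\tau})$ decrypts to the recovered share. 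Under DDH the adversary can form $\gamma_{j,\tau}$ only for corrupted $\tau$, which bounds the DL queries by $s'\cdot t$; dummy queries then pad to $k-1$ and make the simulatability matrix a block-Vandermonde square. In your variant the trigger is the corruption itself, which can also be made to work, but ``one $\mathsf{DL}_g$ query per new corruption'' is too few: upon corrupting $P_\tau$ you must reveal (and make the RO consistent with) each individual share $f_j(\tau)$ for every honest dealer $j\in H$, so you need $|H|$ queries per newly corrupted party, i.e.\ up to $s'\cdot t$ in total---the same budget the paper uses.

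Both embedding strategies fit under the $k\le s(t+1)$ bound, so your variant is salvageable; the paper's all-random-ciphertext plus RO-triggered DL-query approach is simply more uniform, avoids special-casing the Round-1 corrupt set, and makes the DDH step do all the work of bounding the oracle budget.
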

\begin{proof}
    Under DDH assumption in ROM, our building blocks, including the VRF, the NIZKPoK, the proof of decryption, and the multi-recipient encryption, are secure.
    
    First, we argue the $t$-consistency. Note that the public key $pk$ and the vector of public key shares are deterministically computed based on the set of qualified dealers, which are further determined by the information in the broadcast channel. As all honest users have the same view of the broadcast channel, the $t$-consistency follows easily.
    
    Then, we show the $t$-correctness. Recall that $pk = \prod_{j\in \mathsf{Qual}} \mathsf{cm}_0^{(j)}$, and $pk_i = \prod_{j\in \mathsf{Qual}} \mathsf{cm}_i^{(j)}$ for $i \in [n]$. Based on line 10 of round 2, for each $j \in \mathsf{Qual}$, with an overwhelming probability, there is a polynomial $f_j(x) \in \mathbb{Z}_p[X]$ whose degree is up to $t$, such that $\mathsf{cm}_{i}^{(j)} = g^{f_j(i)}$. Therefore, define $f(X) = \sum_{j\in \mathsf{Qual}} f_j(X)$, and then it follows that $pk = g^{f(0)}$ and $pk_i = g^{f(i)}$. Meanwhile, every honest $P_i$ should have $f(i)$. If an honest $P_i$ does not have $f(i)$, there must exist an index $j\in \mathsf{Qual}$ such that $P_i$ does not have $f_j(i)$. In this case, $P$ should follow the protocol description and multicast a verifiable complaint against the dealer $j$ to all other parties. As the verifiable complaints are posted to the broadcast channel by an any-trust group, a verifiable complaint against $j$ must be included. Then, $j$ should be disqualified, which contradicts our assumption that $j$ is in $\mathsf{Qual}$.
    
    For correctness, it remains to show that the set $\mathsf{Qual}$ is non-empty. By parameter and the security of VRF, the sampled committee contains at least one honest node with high probability. We argue this honest node will be included in $\mathsf{Qual}$. Particularly, this node shall broadcast an honestly generated transcript that contains valid shares. It is easy to see that the complaints in our system are unforgeable due to the soundness of proof of decryption. Therefore, this node cannot be disqualified because of this transcript. Moreover, although this node may be corrupted after it sends out the transcript, by the forward security of the underlying signature scheme, the adversary cannot send another message with a valid signature in this round, which means the honest node cannot be disqualified because of post-corruption.
    
    Given its length, the analysis for the oracle-aided algebraic security is presented in Lemma.\ref{lemma:oracle-aided}, and the analysis for the key expressibility is in Lemma.\ref{lemma:ke}.
\end{proof}

\begin{lemma}\label{lemma:oracle-aided}
    The \dkgname\ satisfies $(t, k, T_\adv, T_{\mathsf{sim}})$-oracle-aided algebraic simulatability.
\end{lemma}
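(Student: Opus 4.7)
The plan is to construct an algebraic simulator $\mathsf{Sim}$ that, on input $\zeta = (g^{z_1}, \ldots, g^{z_k})$, implicitly embeds the $k$ challenge exponents into the polynomial commitments produced by the honest members of the any-trust dealing committee. Because this committee contains at most $s$ dealers, each committing implicitly to a degree-$t$ polynomial with $t+1$ coefficients, the total number of ``slots'' available to absorb challenges is $s(t+1)$, matching the bound $k \le s(t+1)$. The simulator never samples the honest dealers' polynomials in the clear; instead, it maintains, for each coefficient, a formal linear combination of the $z_j$'s and public constants, and computes all commitments and ciphertexts consistently with this implicit representation.

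First, $\mathsf{Sim}$ simulates the setup honestly: it generates the PKI, PKE, VRF, and forward-secure signing keys for every party and answers random-oracle queries lazily. VRF-based sortition runs unchanged, and by the pseudorandomness of the VRF the identity of the honest committee members remains hidden until their Round~1 broadcast. For each honest dealer $D_i$, $\mathsf{Sim}$ assembles each commitment $\mathsf{cm}_\tau^{(i)}$ by multi-exponentiating the elements of $\zeta$ according to the chosen encoding; it fixes the multi-recipient ciphertext to any currently-honest receiver $P_\ell$ to a uniformly random string while deferring the value of $\mathsf{Hash}(ek_\ell^r)$ until $P_\ell$ is later corrupted. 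Immediately after broadcasting, the simulator performs the same erasures as an honest dealer would and updates the forward-secure signing key; these two steps together ensure that even if $\adv$ corrupts $D_i$ right after Round~1, it can neither recover the polynomial nor inject an alternative signed transcript for the same round.

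When $\adv$ corrupts a party $P_\ell$ at any later point, the simulator must produce its internal state, including the shares $f_j(\ell)$ from every honest dealer $D_j$ that has already dealt. Since each such share is a linear form in the $z_j$'s, $\mathsf{Sim}$ queries $\mathsf{DL}_g(\cdot)$ on the corresponding group element to learn the still-unknown combination, then patches $\mathsf{Hash}(ek_\ell^r)$ so that the previously-sent ciphertext decrypts to the extracted value. Because $\adv$ corrupts at most $t$ parties, at most $t$ independent relations are consumed out of the $t+1$ formal coefficients per dealer, leaving one free combination per honest dealer that is absorbed into the final public key $pk = \prod_j \mathsf{cm}_0^{(j)}$. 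Careful coordination of which monomials in the $z_j$'s are consumed by which corruption keeps the total number of $\mathsf{DL}_g$ queries at $k-1$, and by choosing the embedding so that the resulting algebraic coefficients together with those of $pk$ span a Vandermonde-type system, the simulatability matrix $L$ of Definition~\ref{def:oracleaided} becomes invertible over $\mathbb{Z}_p$.

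The main obstacle I anticipate is justifying that patching the random oracle on $ek_\ell^r$ at corruption time is sound, since in principle $\adv$ could have queried $\mathsf{Hash}$ on that point earlier and detected the inconsistency with the random ciphertext. I would rule this out by a reduction to DDH on the tuple $(g, ek_\ell, g^r, ek_\ell^r)$, following the ROM treatment of multi-recipient ElGamal of \cite{HeuerJKS15, HuangLCRJ19}: any premature query becomes a DDH distinguisher. Combining this with (i) forward-secure signature unforgeability, which prevents post-corruption message injection in past rounds; (ii) soundness of the DLEQ proof-of-decryption, which rules out forged complaints against honest dealers and thereby guarantees that at least one honest committee member remains in $\mathsf{Qual}$ so that the embedded challenges actually appear in $pk$; and (iii) VRF pseudorandomness, a standard hybrid argument yields computational indistinguishability of $\mathsf{view}_{\adv,y,\Pi}$ and $\mathsf{view}_{\adv,\zeta,y,\mathsf{Sim}}$. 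The simulator's running time is bounded by $T_\adv + \mathcal{O}(snt)$, dominated by the $s$ honest dealers each producing $n$ commitments and ciphertexts and handling up to $t$ corruption-time patches.
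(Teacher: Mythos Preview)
Your proposal is essentially correct and follows the same overall strategy as the paper: embed the $k=s(t+1)$ challenges block-wise as the implicit coefficients of the honest dealers' polynomials, send random strings as ciphertexts, and use the $\mathsf{DL}_g(\cdot)$ oracle together with lazy programming of $\mathsf{Hash}$ to make those ciphertexts decrypt consistently once the adversary is in a position to check them, with DDH ruling out premature hash queries on $ek_\ell^r$.

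Two differences are worth noting. First, you trigger the $\mathsf{DL}_g$ query and the random-oracle patch \emph{at corruption time}, whereas the paper triggers them \emph{when $\adv$ actually issues the random-oracle query on $\gamma_{j,\tau}=pk_\tau^{r_j}$}. Both are sound and lead to the same $s' t < k-1$ bound on oracle calls; the paper's trigger just avoids spending a $\mathsf{DL}_g$ query for corrupted parties that never decrypt. Second, your treatment of the simulatability matrix is stated abstractly (``Vandermonde-type system''); the paper makes this concrete by assigning each honest dealer $\mathcal{D}_j$ the fixed block $(g^{z_{(j-1)(t+1)+1}},\ldots,g^{z_{j(t+1)}})$, so the $\mathsf{DL}_g$ queries carry representations $(\tau^0,\ldots,\tau^t)$ supported on that block only, and then pads with explicit dummy queries (the remaining unit vectors plus one extra evaluation point per honest dealer) to fill out an invertible $k\times k$ block-Vandermonde matrix. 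Finally, the paper does not invoke VRF pseudorandomness in this lemma at all: the simulator holds every party's VRF secret key and runs sortition honestly, so no indistinguishability argument is needed there.
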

\begin{proof}
    By definition, if $\Pi$ satisfies the oracled-aided algebraic simulatability, then, for every adversary $\adv$,  there will be an algebraic simulator $\mathsf{Sim}$ which can indistinguishably simulate the environment for $\adv$.
    We proceed with the proof by presenting the code of a universal simulator $\mathsf{Sim}$, which has access to the adversary $\adv$. 
    
    On inputs a vector of group elements $\zeta = (g^{z_1}, g^{z_2}, \dots, g^{z_k})$ for $k = n(t+1)$, $\mathsf{Sim}$ can simulate each phase of $\Pi$ for $\adv$ as follows.
    
    \smallskip\noindent{\bf \textsf{SETUP}.} $\mathsf{Sim}$ initializes the set of corrupted parties $\mathcal{C} = \emptyset$, the set of honest parties $\mathcal{H} = \{P_i \}_{i\in [n]}$, and a table $\mathsf{RO}_{\mathsf{hist}} = \emptyset$ to record the query history of the random oracle.  Then, it follows the protocol specifications to generate the public parameters and key pairs for all honest users. It answers the adversary's queries as follows.
    \begin{itemize}
        \item \textbf{Corruption queries.} When $\adv$ asks to corrupt the party $P_i$, $\mathsf{Sim}$ first checks if $|\mathcal{C}|\leq t$. If the check fails, it ignores this query; otherwise, return the secret keys of $P_i$, and update the sets $\mathcal{H} = \mathcal{H} \backslash \{P_i \}$ and $\mathcal{C} = \mathcal{C} \cup \{P_i\}$.
        \item \textbf{Random oracle queries.} When $\adv$ queries the random oracle with an input $x$, $\mathsf{Sim}$ checks if $x$ has been asked before. If there is a record of $(x, \mathsf{output}_x)$ in $\mathsf{RO}_{\mathsf{hist}}$, return $\mathsf{output}_x$; otherwise, uniformly sample $\mathsf{output}_x$, add $(x, \mathsf{output}_x)$ to $\mathsf{RO}_{\mathsf{hist}}$, and return $\mathsf{output}_x$.
    \end{itemize}
    
    \smallskip\noindent{\textbf{Round 1}.} For every honest party $P_i\in \mathcal{H}$, $\mathsf{Sim}$ runs the \textit{Self-Election} procedure using $P_i$'s VRF secret key. 
    We assume w.l.o.g. there are $s'\leq n$ honest parties being selected and denote the set by $\mathcal{H}_{\mathsf{ele}}= \{ \mathcal{D}_1, \dots, \mathcal{D}_{s'}\}$, where each party has its credential $\mathsf{CR}_{j,\mathsf{deal}}$.
    Then, $\mathsf{Sim}$ simulates the \textit{Commit to secret} procedure on behalf of each $\mathcal{D}_j \in \mathcal{H}_{\mathsf{ele}}$ as follows.
    \begin{itemize}
        \item Denote $\zeta_j = (\zeta_{j,0}, \zeta_{j,1}, \dots, \zeta_{j, t}) \\= (g^{z_{(j-1)(t+1) +1}}, g^{z_{(j-1)(t+1) +2}}, \dots, g^{z_{j(t+1)}})$. 
        \item Generate the commitments $\mathsf{cm}^{(j)}_{\tau} = \prod_{\mu \in [0,t]} \zeta_{j, \mu}^{\tau^\mu}$, for every $\tau \in [0, n]$.
        \item Generate the ciphertext $(c^{(j)}_{0}, c^{(j)}_{1}, \dots, c^{(j)}_{n})$, where $c^{(j)}_{0} = g^{r_j} $ for some $r_j \sample \mathbb{Z}_p$, and $c^{(j)}_{\tau} \sample \{0,1\}^{\lceil \log p \rceil}$ for $\tau \in [n]$.
        \item Use the simulated signer algorithm of $\mathsf{SoK}$ to sign $j$ w.r.t. $c_0^{(j)}$ and obtain a simulated signature of knowledge $\sigma_{\mathsf{DL}}^{(j)}$.
        \item Broadcast $(\mathsf{CR}^{\mathsf{deal}}_j, \mathsf{cm}^{(j)}_{0}, \dots, \mathsf{cm}^{(j)}_{n}, c^{(j)}_{0}, \dots, c^{(j)}_{n},\sigma_{\mathsf{DL}}^{(j)})$ along with its forward-secure signature on it.
    \end{itemize}
    
    $\mathsf{Sim}$ needs to answer the queries from the adversary. For the random oracle queries and corruption queries made before broadcast, $\mathsf{Sim}$ can respond as it does in the \textsf{SETUP} phase. We discuss its strategy for answering these queries that are made after the broadcast below.
    
    \begin{itemize}
        \item \textbf{Corruption queries.} When $\adv$ asks to corrupt the party $P_i$, $\mathsf{Sim}$ first checks if $|\mathcal{C}|\leq t$. If the check fails, it ignores this query; otherwise,
        $\mathsf{Sim}$ queries the oracle $\mathsf{DL}_g(\cdot)$ with $\mathsf{cm}^{(j)}_{i}$ for all $j\in \mathcal{H}_{\mathsf{ele}}$ with its representation $(1, i^1, \dots, i^t)$ over $\zeta_j$.
        $\mathsf{Sim}$ will receive $\xi_{i}^{(j)}$ from the oracle.  Then, $\mathsf{Sim}$ records all $\{x_{i}^{(j)}\}$, which are secret shares dealt by honest dealers. $\mathsf{Sim}$ can obtain the shares dealt by corrupted dealers for $P_i$ by decrypting the encrypted shares using $dk_i$. Finally, $\mathsf{Sim}$ returns the secret shares for $P_i$ and its decryption key $dk_i$ to $\adv$, and updates the sets $\mathcal{H} = \mathcal{H} \backslash \{P_i \}$ and $\mathcal{C} = \mathcal{C} \cup \{P_i\}$.
        \item \textbf{Random oracle queries.} Before answering any random oracle queries at this stage, $\mathsf{Sim}$ first calculates a matrix of group elements
        \begin{equation*}
            \gamma = \left(
            \begin{aligned}
                & \gamma_{1, 1}& \gamma_{1,2}& \dots & \gamma_{1,n}\\
                & \gamma_{2, 1}& \gamma_{2,2}& \dots & \gamma_{2,n}\\
                & \vdots& \vdots & & \vdots\\
                & \gamma_{s', 1}& \gamma_{s',2}& \dots & \gamma_{s',n}\\
            \end{aligned}
            \right),
        \end{equation*}
        where each $\gamma_{j, \tau} = pk_\tau^{r_j}$ for $j \in [s']$ and $\tau\in [n]$, $pk_\tau$ is the encryption public key of $P_\tau$, and $r_j$ is the randomness used in encryption by $\mathsf{Sim}$ when simulating $\mathcal{D}_j$. $\mathsf{Sim}$ checks if any $\gamma_{j, \tau}$ has been asked before and \textbf{aborts} in one is in the query history. Otherwise, continue.
        
        When $\adv$ queries a message $x$, $\mathsf{Sim}$ performs as follows.

        -- If $x \ne \gamma_{j, \tau}$ for any $j$ and $\tau$, proceed as what it did in the Setup phase.\\
        -- If $x = \gamma_{j, \tau}$ for some $j$ and $\tau$, checks if $P_\tau$ has been corrupted. If it has not been corrupted, then
        $\mathsf{Sim}$ first queries the oracle $\mathsf{DL}_g(\cdot)$ with $\mathsf{cm}^{(j)}_{\tau}$ and its representation $(\tau^0, \tau^1, \dots, \tau^t)$ over $\zeta_j$. $\mathsf{Sim}$ will receive $\xi_{j, \tau}$ from the oracle. If it is corrupted, then $\xi_{j, \tau}$ has been recorded by $\mathsf{Sim}$. Finally, 
        it sets $\mathsf{output}_{\gamma_{j, \tau}} : = c^{(j)}_{\tau} \oplus \xi_{j, \tau}$, records $(\gamma_{j, \tau}, \mathsf{output}_{\gamma_{j, \tau}})$ into $\mathsf{RO}_{\mathsf{hist}}$, and returns $\mathsf{output}_{\gamma_{j, \tau}}$ to $\adv$.
    \end{itemize}

    \smallskip\noindent{\bf Other rounds.} $\mathsf{Sim}$ simulates the behavior of honest parties by following the specifications of the protocol, except that whenever an honest party $P_i$ needs to decrypt an encrypted share $(c_0, c_1, \dots, c_n)$, $\mathsf{Sim}$ instead performs the following procedures for decryption.
    \begin{itemize}
        \item Use the extractor of the SoK to obtain $r$, such that $c_0 = g^r$. Then, use $r$ to ``decrypt'' the encrypted share as $sk_i = \mathsf{Hash}(ek_i^r)\oplus c_i $.
    \end{itemize}
        The queries from $\adv$ are answered in the same way as $\mathsf{Sim}$ did in Round 1. 
    
    Let $\mathsf{Qual}_C$ be the set of qualified nodes that are corrupted before the \textbf{Round 1}, and $\mathsf{Qual}= \mathsf{Qual}_C \cup \mathcal{H}_{\mathsf{ele}}$. For every $j\in \mathsf{Qual}_C$, the dealer must have distributed its secret shares to honest nodes; otherwise, it will be disqualified. As $\mathsf{Sim}$ has always controlled more than $t+1$ honest participants, it can recover the secret key $sk_j$ w.r.t. $pk_j$ for every $j\in \mathsf{Qual}_C$. Therefore, $\mathsf{Sim}$ can output the algebraic representation for the public key as:
    \begin{equation*}
        pk = \prod_{j\in \mathsf{Qual}}pk_j = g^{\sum_{j\in \mathsf{Qual}_C}sk_j}\prod_{j\in [s']}g^{z_{(j-1)(t+1)+1}}.
    \end{equation*}
    
    \smallskip Now, we argue that the simulator specified above satisfies the requirements of oracle-aided simulatability. First, it is easy to verify that the running time of $\mathsf{Sim}$ is $T_\adv + \mathcal{O}(snt)$. 
    
    Then, we show that $\mathsf{view}_{\adv, y, \Pi}$ and $\mathsf{view}_{\adv, y, \Pi}$ are identical, under the condition that $\mathsf{Sim}$ never aborts during the simulation. Specifically, from the point of $\adv$'s view, the commitment sequence outputted by an honest party $\mathcal{D}_j$ is a commitment to the polynomial $f_j(x) = \sum_{\tau = 0}^n z_{(j-1)(t+1) + \tau+1}x^{\tau}$. Note that the input group elements of $\mathsf{Sim}$ are uniformly sampled, and thus, the distribution of $f_j(x)$ is also uniform, which is identical to that in the real experiment. Moreover, in the random oracle model, the distribution of ciphertexts simulated by $\mathsf{Sim}$ is also identical to the real distribution. Notably, for every $pk_{\tau}^{r_{j}}$ that has been issued to the random oracle, which means that $\adv$ can decrypt the ciphertext $c_{j,\tau}$, it follows that
    \begin{equation*}
        c_{\tau}^{(j)} = \mathsf{Hash}(pk_{\tau}^{r_{j}})\oplus f_{j}(\tau).
    \end{equation*}
    
    Next, we argue that $\mathsf{Sim}$ only aborts with a negligible probability. When $\mathsf{Sim}$ aborts, $\adv$ must have queried the random oracle with some $x = \gamma_{j, \tau}$ before seeing the broadcast messages. However, $\gamma_{j, \tau} = pk_{\tau}^{r_j}$ is a uniformly random group element, as $r_j$ is uniformly chosen from $\mathbb{Z}_p$ and completely independent of $\adv$'s view before $g^{r_j}$ is broadcasted. Therefore, $\adv$ has negligible probability if outputting $pk_{\tau}^{r_j}$.
    
    Then, we show that $\mathsf{Sim}$ has made at most $k-1$ queries to the $\mathsf{DL}_{g}(\cdot)$ oracle. Recall that $\mathsf{Sim}$ makes a query to $\mathsf{DL}_{g}(\cdot)$ whenever $\adv$ corrupts a party or queries the random oracle with a message $x$ which is equal to some $\gamma_{j, \tau}$. We note that under the DDH assumption, $\adv$ can output $\gamma_{j, \tau} = pk_\tau^{r_j}$ only when $\adv$ has corrupted the party $P_\tau$ (and thus can compute $\gamma_{j, \tau} = (g^{r_j})^{sk_\tau}$), except a negligible probability. As $\adv$ can corrupt at most $t$ parties, $\mathsf{Sim}$ will query $\mathsf{DL}_{g}(\cdot)$ at most $ts'$ times, which is smaller than $k-1$. 
    
    Finally, we show the simulatability matrix $L$ of $\mathsf{Sim}$ is invertible. Without loss of generality, we assume that the adversary has corrupted the parties $P_1, \dots, P_t$, and $\mathsf{Sim}$ has made $s't$ queries to $\mathsf{DL}_g(\cdot)$ for simulating the queries from the adversary. For ease of analysis, we let $\mathsf{Sim}$ make some dummy queries such that the representations of all the queries are gonna form a square matrix of order $n(t+1)$. Specifically, $\mathsf{Sim}$ makes the following extra queries:
    \begin{equation*}
        g^{z_{s'(t+1)+1}}, g^{z_{s'(t+1)+2}}, \dots, g^{z_{n(t+1)}},
    \end{equation*}
    and 
    \begin{equation*}
        \prod_{\mu\in [0,t] } \zeta_{j,\mu}^{(t+1)^\mu}, \text{ for } j\in [1, s'-1].
    \end{equation*}
    The number of all queries by $\mathsf{Sim}$ is $s't+(n-s')(t+1) + s'-1 = n(t+1) -1$, which is still smaller than $k$. It is easy to verify the matrix $L$ is invertible.    
\end{proof}

\begin{lemma}\label{lemma:ke}
    The \dkgname\ satisfies the key-expressability.
\end{lemma}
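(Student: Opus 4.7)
The plan is to construct a simulator $\mathsf{Sim}$ that, on input a uniformly random $pk' \in \mathbb{G}$, embeds $pk'$ as the public-key contribution of a single honest dealer while running all other honest parties faithfully. Because the adversary is static, $\mathsf{Sim}$ learns the corrupted set $C \subseteq [n]$ with $|C| \le t$ before \textbf{Round 1}. It then runs $\mathsf{VRF.Sortition}$ for every honest party; by the sortition parameters and the pseudorandomness of the VRF, at least one honest party $\mathcal{D}^*$ is elected as a dealer except with negligible probability (this bad event is absorbed into the simulation error). $\mathsf{Sim}$ embeds $pk'$ into $\mathcal{D}^*$'s VSS transcript and simulates all remaining elected honest dealers by running the prescribed code with freshly sampled polynomials.

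To simulate $\mathcal{D}^*$ without knowing $\log_g pk'$, $\mathsf{Sim}$ samples $\{\hat s_i\}_{i \in C}$ uniformly from $\mathbb{Z}_p$ and treats the pair $(0,\log_g pk')$ together with $\{(i,\hat s_i)\}_{i \in C}$ as $|C|+1$ implicit points defining a polynomial $f^{\ast}$ of degree $t$ (padding with additional random $(i,\hat s_i)$ pairs at fresh indices if $|C| < t$). By Lagrange interpolation, $f^{\ast}(\tau) = L_0(\tau)\log_g pk' + \sum_{i} L_i(\tau)\,\hat s_i$, so every commitment can be computed in the exponent as $\mathsf{cm}_\tau^{(\mathcal{D}^*)} = (pk')^{L_0(\tau)} \cdot g^{\sum_i L_i(\tau)\hat s_i}$. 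For the ciphertexts, $\mathsf{Sim}$ encrypts the true value $\hat s_i$ to each $i \in C$ and encrypts arbitrary values (e.g., $0$) to honest recipients. It then signs the transcript, updates the forward-secure key, and erases as prescribed. The remaining honest dealers are simulated exactly as in the real protocol, and the complaint phase is driven by simulated honest parties that, by construction of $\mathcal{D}^*$'s ciphertexts to $C$, see no inconsistency for $\mathcal{D}^*$.

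To extract the linear expression for $pk$, $\mathsf{Sim}$ recovers the contributed secret $sk^{(j)}$ of every corrupted dealer $j \in \mathsf{Qual}$ by Lagrange-interpolating from any $t+1$ of the honest shares it decrypts (it controls at least $t+1$ honest recipients because $n \ge 2t+1$); it already knows $sk^{(j)}$ for every other simulated honest dealer $j \ne \mathcal{D}^*$ in $\mathsf{Qual}$. Setting $\alpha = 1$, $sk_1 = \sum_{j \in \mathsf{Qual},\, j \ne \mathcal{D}^*} sk^{(j)}$, and $pk_1 = g^{sk_1}$, the aggregation rule of the protocol yields $pk = \mathsf{cm}_0^{(\mathcal{D}^*)} \cdot \prod_{j \ne \mathcal{D}^*} \mathsf{cm}_0^{(j)} = pk' \cdot pk_1 = (pk')^{\alpha} \cdot pk_1$, as required.

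The main obstacle is showing that $\mathcal{D}^*$'s simulated transcript is indistinguishable from an honest one given that $\mathsf{Sim}$ does not know $\log_g pk'$. The commitments inherit the correct joint distribution of a Feldman-style commitment to a uniformly random degree-$t$ polynomial, so the dual-code check at line 10 of Round 2 passes automatically; the shares delivered to corrupted recipients decrypt to exactly $\hat s_i$ and match $\mathsf{cm}_i^{(\mathcal{D}^*)}$, so no valid complaint against $\mathcal{D}^*$ can be produced by $\adv$; and the ciphertexts to honest recipients hide their plaintexts under the IND-CPA security of the multi-recipient hybrid ElGamal scheme in the ROM, with the random oracle queries $\mathsf{Hash}(ek_j^{r})$ for honest $j$ simulated lazily. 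Arguing this last point via a standard hybrid that swaps each honest-recipient ciphertext from $f^{\ast}(j)$ to $0$ is where I expect the bulk of the technical bookkeeping to lie, but the argument follows the same ROM-based template already exploited in Lemma~\ref{lemma:oracle-aided}.
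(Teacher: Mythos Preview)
Your proposal is correct and follows essentially the same approach as the paper: embed $pk'$ as $\mathsf{cm}_0$ of a single honest dealer in the any-trust group, sample random shares for the corrupted indices, fill in the remaining commitments by Lagrange interpolation in the exponent, give honest recipients dummy ciphertexts (indistinguishable in the ROM), and recover all other qualified dealers' secrets from the $t{+}1$ honest shares to output $\alpha=1$ and $sk_1=\sum_{j\in\mathsf{Qual},\,j\ne\mathcal{D}^*} sk^{(j)}$. Your write-up is in fact more explicit than the paper's on the padding needed when $|C|<t$ and on the hybrid argument for the honest-recipient ciphertexts.
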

\begin{proof}
    This proof is similar to the proof for Lemma.\ref{lemma:oracle-aided}, except we don't need to handle adaptive corruption queries. For any PPT adversary $\adv$, we can construct a PPT simulator $\mathsf{Sim}$ that takes as input a public key $pk'\in \mathbb{G}$ and simulates the view of $\adv$. Assume the sef of corrupted paries is $\{P_i \}_{i\in \mathsf{Corr}}$ for some $\mathsf{Corr}\subset[n]$ and $|\mathsf{Corr}|\leq t$.   After sampling the any-trust group, $\mathsf{Sim}$, on behalf of the honest node in the group, creates the following transcript: $\mathsf{cm}_0 = pk'$, $c_0=g^r$ for some $r\sample \mathbb{Z}_p$;  For $i\in \mathsf{Corr}$, $sk_i\sample \mathbb{Z}_p$, $\mathsf{cm}_i = g^{sk_i}$, and $c_i = \mathsf{Hash}(ek_i^r)\oplus sk_i$. For $i\notin \mathsf{Corr}$, $\mathsf{cm}_i$ are created by Lagrange interpolation in the exponent, while $c_i$ are randomly sampled. This transcript is indistinguishable from an honestly generated one in the view of $\adv$ and cannot be disqualified. For every other transcript with $\mathsf{cm}^{(j)} = pk^{(j)}$ which is eventually included in the qualified set, $\mathsf{Sim}$ can know the secret key $sk^{(j)}$ by reconstructing it from shares held by honest nodes. Note that the final public key is in the form of $pk'\cdot \prod pk^{(j)}$, and the simulator can express it by setting  $\alpha =1$, $sk'' = \sum sk^{(j)}$.
\end{proof}

\smallskip\noindent{\bf Committee Size.}
Recall that our construction employs a VRF-based sortition to decide the committee, in which each node can be independently elected a committee member with a probability $\texttt{ratio}$. Assume a network of $n$ nodes while at least $h$ of them remains honest. Such a sortition process will produce a committee of the expected size of $s = \texttt{ratio} \cdot n$. Then the probability $\texttt{p}$ that at least one honest node being elected can be expressed as follows:
\begin{equation}
    \texttt{p} = 1- (1-\frac{s}{n})^h.
\end{equation}

We compute the expected committee sizes necessary to ensure different values of $\texttt{p}$ in networks with varying ratios of honest parties, as depicted in Table \ref*{tab:committee_sizes}. For example, assuming over $51\%$ participants of the whole network are honest, we can set the expected committee size as $38$, which ensures the resulting committee contains at least one honest node with the probability of at least $1-5\times 10^{-9}$. These findings are applicable to networks of any size, although for networks with $n \leq 10^4$, a slightly smaller committee size may be achievable.

\begin{table}[htbp]
    \centering
    \begin{tabular}{|l|c|c|c|}
    \hline
    \diagbox{$\texttt{PR}$}{$\texttt{HR}$} & 51\% & 67\% & 80\% \\
    \hline
    $1- 5\times 10 ^{-9}$ & 38 & 29 & 24\\
    $1-  2 ^{-30}$ & 41 & 32 & 26\\
    $1- 2 ^{-40}$ & 55 & 42 & 35\\
    \hline
    \end{tabular}
    \caption{Expected committee sizes for different probability guarantees ($\texttt{PR}$) under different honest-party ratio ($\texttt{HR}$)}
    \label{tab:committee_sizes}
    \vspace{-1.cm}
\end{table}

\section{Sub-ID Allocation for the Weighted Setting}\label{sec:subid}
In this section, we present a simple yet effective sub-ID allocation mechanism that enables us to apply a conventional distributed protocol like our \dkgname\ in the weighted setting. Compared with the straightforward sub-ID allocation mechanism, ours dramatically reduces the number of required sub-IDs.

\smallskip\noindent{\bf Qualified allocation.} The traditional sub-ID allocation method ensures that the proportion of sub-IDs held by honest participants is equal to the proportion of an honest participant's weights, which we call a \textit{perfect} allocation. However, we notice a gap between the usual assumption on the honest participant's weight ratio, which is typically assumed to be more than 2/3 due to other components of the system, and the honest ratio needed in threshold cryptography, which is usually just above 1/2. Therefore, we consider a lossy-yet-qualified allocation, which guarantees that more than half of the sub-IDs will be issued to honest participants if they have more than 2/3 of the weights\footnote{While our discussion primarily centers on the gap between 2/3 and 1/2, the underlying concept can be effortlessly extended to address other thresholds or scenarios.}. Formally, we have the following definition.
\begin{definition}[Qualifed Allocation]\label{def:quallo}
    Let $W = (w_1, \dots, w_n)$ be a sequence of positive integers. Let $A$ and $B$ be any partition of the index set $[n]$ (\textit{i.e.}, $A\cup B = [n]$ and $A\cap B = \emptyset$). We say a function $\mathsf{AllocateSubID}(w_1, \dots,w_n)\rightarrow (d_1, \dots, d_n)$, where $d_i$'s are non-negative integers, is a \textbf{qualified allocation} for $W$, if for every $(A,B)$ s.t.
    \begin{equation*}
        \sum_{i\in A} w_i > 2\cdot \sum_{i\in B} w_i,\text{     it holds that }\sum_{i\in A} d_i > \sum_{i\in B} d_i.
    \end{equation*}
\end{definition}
While such a qualified allocation suffices for security, we need to find an allocation method that minimizes the number of all sub-IDs, \textit{i.e.}, $\sum_{j} d_j$ is as small as possible.

\smallskip\noindent{\bf Our method.} 
We start by observing that dividing each $w_i$ by the greatest common division (GCD) leaves the fraction for any index subset $A$ unchanged. This realization provides a straightforward allocation approach: $d_i = \frac{w_i}{\textsf{gcd}}$. However, if the GCD is small, the total sub-IDs can be vast.

A viable approach is to modify each $w_i$ to $w_i'$ so the new sequence $W' = (w_1',\dots, w_n')$ has a substantial GCD. This adjustment might increase some subsets' proportions while reducing others, potentially strengthening the adversary. Still, we determine that any increased power for the adversary remains capped if we limit the total adjustments. 

Specifically, we call an adjustment $t$-bounded for $(w_1, \dots, w_n)$, if the adjusted values $(w_1', \dots, w_n')$ satisfies $\sum_{i\in [n]}|w_i-w_i'|\leq t$. Then, if $\sum_{i\in[n]}w_i\geq 3t+1$, it ensures that  
for any partition $(A,B)$ over $[n]$ satisfying $\sum_{i\in A}w_i > 2\cdot \sum_{i\in B}w_i$, it holds that $\sum_{i\in A} w_i' > \sum_{i\in B} w_i'$, given the inequality:
\begin{equation}
    \begin{aligned}
    \sum_{i \in A} w_i' - \sum_{i \in B} w_i'
        \geq
        \sum_{i \in A} (w_i-\Delta_i) - \sum_{i \in B} (w_i +\Delta_i)\geq 1
    \end{aligned}
\end{equation}
Here, $\Delta_i = |w_i-w_i'|$. Following the adjustment, Sub-IDs, $d_i$, are derived by dividing $w_i'$ by this higher GCD.

Given our objective to minimize $\sum_{j} d_j$, the goal is to enhance the GCD. To achieve this, we consider a target $\textsf{gcd}$, defining an adjustment function $f_{\textsf{gcd}}(w_i)\rightarrow w_i'$ as:
\begin{equation}
    w_i' = \left\{
    \begin{aligned}
        & w_i  - (w_i \bmod \textsf{gcd}),  \text{ if } w_i\bmod \textsf{gcd} < \textsf{gcd}/2,\\
        &w_i + \textsf{gcd} - (w_i \bmod \textsf{gcd}), \text{ otherwise. }
    \end{aligned}
    \right.
\end{equation}

Starting with $\textsf{gcd}=1$, we increase it until $f_{\textsf{gcd}}$ is no longer a $t$-bounded adjustment for $W$. Utilizing binary search can quickly find a very large $\textsf{gcd}$. While variations in $(w_1, \dots, w_n)$ may suggest larger $\textsf{gcd}'$, our found $\textsf{gcd}$ is practically near-optimal. The allocation algorithm is detailed below.

\begin{pchstack}[boxed,center]
    \procedure{$\mathsf{AllocateSubID}(w_1,\dots,w_n)$}{%
        \textbf{binary search the largest } \textsf{gcd} \text{ from 0 to } \max_{i}w_i\\
        \t \text{s.t. } f_{\textsf{gcd}} \text{ is }t\text{-bounded for }(w_1, \dots, w_n)\\
        \textbf{output } (d_i = \frac{f_\textsf{gcd}(w_i)}{\textsf{gcd}})_{i\in[n]}
    }
\end{pchstack}

\medskip\noindent{\bf Efficiency and effectiveness.} Note that our $\mathsf{AllocateSubID}$ is only supposed to find a $t$-bounded $f_{\textsf{gcd}}$ for its input $(w_1,\dots,w_n)$. So we can efficiently check whether $\sum_{i\in [n]} |f_{\textsf{gcd}}(w_i)-w_i|\leq t$. Thus, the time-cost of $\mathsf{AllocateSubID}$ is $O(n\log n)$. Meanwhile, using binary search is effective since there is a general trend that the larger the gcd is, the larger adjustment is needed. It can give us a $t$-bounded $f_{\textsf{gcd}}$ for the input with a large gcd (not necessarily optimal).

Our sub-ID allocation is a qualified allocation as per Def.\ref{def:quallo}, since $f_{\textsf{gcd}}$ is $t$-bounded for $(w_1, \dots, w_n)$. Moreover, for a set of $n$ validators with an arbitrary power distribution, our method only issues at most $2n$ sub-IDs.
\begin{lemma}
    Given any sequence $W = (w_i)_{i\in[n]}$ with $\sum_{i\in[n]} w_i = 3t+1 $ for some integer $t$, let $(d_1, \dots, d_n)$ be the output of our $\mathsf{AllocateSubID}$. It follows that 
$
        \sum_{i\in [n]} d_i \leq \frac{4t+1}{\lfloor 2t/n \rfloor},
$
    which is around $2n$ when $n\ll t$.
\end{lemma}
\begin{proof}
    Let $\textsf{gcd} = \lfloor 2t/n \rfloor$. It is easy to see that $(w_1',\dots, w_n')$ outputted by $f_{\textsf{gcd}}(w_1,\dots, w_n)$ and $(w_1,\dots, w_n)$ are bounded by $n\cdot \lfloor 2t/n\rfloor /2 = t$. Let $d_i = \frac{w_i'}{\textsf{gcd}}$. It holds that $\sum_{i\in[n]} d_i = \frac{\sum_{i\in n}w_i'}{\lfloor 2t/n \rfloor} \leq \frac{4t+1}{\lfloor 2t/n \rfloor} \approx 2n$.
\end{proof}

\smallskip\noindent{\bf Comparison with Swiper/Dora \cite{abs-2307-15561}.} A concurrent work, Swiper/Dora \cite{abs-2307-15561}, also addresses the imparity between conventional threshold cryptography and the weighted setting. In Table \ref{tb:subid}, we compare our method and theirs for validator sets across various PoS systems. The comparison is under the same condition, \textit{i.e.}, ensuring more than 1/2 sub-IDs are allocated to honest parties with more than 2/3 weights. The result shows our method issues fewer sub-IDs to large sets of validators, such as Algorand and Fielcoin\footnote{We note a very recent version of Swiper/Dora\cite{TonkikhS24} (after our paper submission) has further reduced the number of sub-IDs.}.

\begin{table}
    \caption{Comparison with Swiper/Dora}\label{tb:subid}
    \vspace{-0.4cm}
    \begin{tabular}{c|cccc}
        Systems& \# Parties & \#Total Weights & \cite{abs-2307-15561} & Ours \\
        \hline
        Aptos\cite{Aptos} & 104 & $8.4708\times 10^{8}$ & {\color{red}27} & 34\\
        \hline 
        Tezos\cite{Tezos} & 382 & $6.7579\times 10^{8}$ & {\color{red}75} & 77\\
        \hline 
        Filecoin\cite{Filecoin} & 3700 & $2.5242\times 10^{19}$ & 1895 & {\color{red}1688} \\
        \hline 
        Algorand\cite{ChenM19} & 42920 & $9.7223\times 10^{9}$ & 373 &  {\color{red}301}
    \end{tabular}
    \vspace{-0.4cm}
\end{table}
\section{Practical Extended Broadcast Channels}\label{sec:bc}
In this section, we introduce a practical extension to the blockchain-based broadcast channel. Although it is folklore knowledge that, theoretically, one may throw all messages into the ledger to facilitate a broadcast, this may incur prohibitive costs in practice, as on-chain resources are generally very expensive. Instead, our extension empowers users to broadcast a message of arbitrary length while inscribing only a {\em constant-size} storage on the blockchain. Crucially, our enhanced broadcast channel retains its original simplicity and modularity. Users can conveniently interact with it using the APIs of well-established infrastructures, including both blockchains and a data dispersal network (DDN) like IPFS \cite{TrautweinRTCSSG22}.

\subsection{Building Blocks}
We formalize our building blocks. For simplicity, we model a blockchain as a public bulletin board (PBB) that allows users to post and retrieve data. 

\smallskip\noindent{\bf Public Bulletin Board.} We follow the model of PBB presented in \cite{KidronL11} and extend it to support \textit{keyword}-based retrival. A user can interact with PBB by using the following queries: 
\begin{itemize}
    \item $\mathsf{getCounter}()\rightarrow t.$ It returns the current counter value $t$.
    \item $\mathsf{post}(\textsf{kw},v)\rightarrow t$. On receiving value $v$ along with a keyword $\textsf{kw}$, it increments the counter value by 1 to $t$, stores $(t, \textsf{kw}, v)$, and responses $t$.
    \item $\mathsf{retrieve}(t_\mathsf{start}, t_\mathsf{end}, \textsf{kw})\rightarrow \{(v_i,t_i)\}$. It returns all pairs of $(v_i, t_i)$, such that $t_\mathsf{start}\leq t_i \leq t_\mathsf{end}$  and $\textsf{kw}$ is their keyword. 
\end{itemize}

We care about the storage cost of PBB. For a user posting $\ell$ bits to the PBB, we denote the cost as $\mathcal{PB}(\ell)$. We assume that a PBB satisfies the \textit{validity} and \textit{agreement}.

\smallskip\noindent\underline{\scshape Validity.} Assume an honest user posted $(v, \textsf{kw})$  to the PBB and received $t$. Then, every honest user who retrives with $(t_\mathsf{start}, t_\mathsf{end}, \textsf{kw}')$ such that $t_\mathsf{start}\leq t\leq  t_\mathsf{end}$ and $\textsf{kw}' = \textsf{kw}$  will receive a sequence of value/counter pairs containing $(v, t)$.  

\smallskip\noindent\underline{\scshape Agreement.} If an honest user retrieving with $(t_\mathsf{start}, t_\mathsf{end}, \textsf{kw})$ when $\textsf{getCounter}() \geq t_\mathsf{end}$ receives a sequence of value/counter pairs $S$, then every honest user retrieving with $(t_\mathsf{start}, t_\mathsf{end}, \textsf{kw})$ will receive the same $S$.

It is rather straightforward to use PBB as a broadcast channel by simply posting a broadcast message into the PBB. The authenticity can be established with standard digital signatures in the PKI model.

\smallskip\noindent {\bf Data Dispersal Network.} A data dispersal network (DDN) provides a platform where one can provision a data block for others who may need it. 
Compared with standard multicast, which is also for data dissemination, DDN saves communication costs when there are multiple nodes providing the same data block. Assuming there are $m$ receivers out of $n$ potential receivers, and there are $k$ data providers for a data block of $\ell$ bits. Through multicast, every sender needs to send their data to every potential receiver, incurring the communication cost of $k\cdot \mathcal{M}(\ell) = O(kn\ell)$. In contrast, through DDN, each receiver receives exactly one copy of data, incurring a total communication cost of $O(m\ell)$, which is smaller than $\mathcal{M}(\ell)$.

In principle, we can either use an erasure-code-based information dispersal protocol \cite{ReedS1960} or practical infrastures like IPFS \cite{TrautweinRTCSSG22} to instantiate a DDN. In this work, we focus on the IPFS-based instantiation as it becomes easier to implement (given IPFS already exists) and model it with the following two queries, which might be specific to the instantiation.
\begin{itemize}
    \item \textsf{register:} on receiving a node ID $\textsf{nid}$ and a block ID $\textsf{bid}$ (which is the hash value of the data), it checks whether $\textsf{bid}$ has been registered. If not, add a new entry $(\textsf{bid}, \textsf{nid})$; otherwise, it appends $\textsf{nid}$ to the existing entry with $\textsf{bid}$.
    \item \textsf{retrieve:} on receiving a block ID $\textsf{bid}$, it returns the associated datablock $v$, by orchestrating the data flow from candidate providers.
\end{itemize}

We assume as long as there is an honest data provider who has registered $\textsf{bid}$ and remains active, everyone can retrieve the data block with $\textsf{bid}$. We denote the cost of registering for $s$ data blocks as $\mathcal{R}(s)$.

\subsection{Our Extended Broadcast Channel}
\noindent{\bf A strawman and our intuition. }  A naive approach to broadcasting a sizeable data block involves posting its ID, denoted as \textsf{bid}, on the PBB  while simultaneously registering both \textsf{bid} and the sender's ID (\textsf{nid}) on the DDN. However, this methodology cannot guarantee agreement. Specifically, a malicious sender has the capability to selectively deny some retrieval requests on the DDN. Moreover, an adaptive adversary, upon observing the \textsf{bid} on the PBB, can corrupt the sender, subsequently rendering the data inaccessible on the DDN.

\begin{figure}[!htb]
    \begin{pcvstack}[boxed]
        \procedure{\textbf{Round 1: } each sender $S_j(v_j) \pcdo$:}{%
            \text{compute the block ID: } \mathsf{Hash}(v_j)\rightarrow \mathsf{bid}_j\\
            \textbf{post } \mathsf{PBB.post}(\textsf{kw},  \mathsf{bid}_j),  \textsf{kw}:= (\textsf{sid}||\textsf{send}); 
            \textbf{multicast } v_j
        }
        \pcvspace
        \procedure{\textbf{Round 2: } each receiver $P_i \pcdo$:}{%
            \mathsf{PBB}.\textsf{getCounter}()\rightarrow t'_{1} \\
            \text{\color{blue} // assume the index set of senders is $\mathbb{J}$} \\
            \mathsf{PBB}.\textsf{retrieve}(t'_0, t'_{1}, \textsf{sid}||\textsf{send})\rightarrow \{(\textsf{bid}_j,t_j)\}_{j\in \mathbb{J}}\\
            \textbf{receive }\text{multicast messages: } \{v'_j\}_{j\in \mathbb{J}}\\
            \pcfor j\in \mathbb{J}: \pcif \mathsf{Hash}(v'_j) = \mathsf{bid}_j, \pcthen \textsf{valid}_j = 1; \pcelse  \textsf{valid}_j = 0\\
            \mathsf{VRF}.\textsf{Sortition}(rvk_i,rsk_i, \textsf{rand}, \text{``check"}, \textsf{ratio}_{\mathsf{hm}})\rightarrow \textsf{CR}_i\\
            \pcif \textsf{CR}_i \ne \perp\\
            \pcthen \mathsf{PBB.post}(\textsf{kw}', \textsf{CR}_i||(\textsf{valid}_j )_{j\in \mathbb{J}}), \textsf{kw}':= (\textsf{sid}||\textsf{check})
        }
        \pcvspace
        \procedure{\textbf{Round 3: } each receiver $P_i$ (with node id $\textsf{nid}_i$) $\pcdo$:}{%
            \mathsf{PBB}.\textsf{getCounter}()\rightarrow t'_{2} \\
            \mathsf{PBB}.\textsf{retrieve}(t'_1, t'_{2}, \textsf{sid}||\textsf{check})\rightarrow \{\textsf{CR}_k||(\textsf{valid}_j^{(k)} )_{j\in \mathbb{J}}\}_{k\in \mathbb{K}'}\\
            \text{verify every } \textsf{CR}_k, \text{ and obtain the valid set }\mathbb{K}\subset\mathbb{K}'\\
            \pcfor j\in\mathbb{J}: \pcif \sum_{k\in \mathbb{K}} \textsf{valid}_{j}^{(k)}\geq \frac{|\mathbb{K}|}{2} + 1\\
            \t \pcthen \textsf{final}_j = 1; \pcelse \textsf{final}_j = 0\\
            \pcfor j\in \mathbb{J}, \pcif \textsf{final}_j= \textsf{valid}_j =1, \pcthen \mathsf{DNN.register}(\textsf{nid}_i, \textsf{bid}_j)
        }
        \pcvspace
        \procedure{\textbf{At the end of Round 3:} each receiver $P_i$ $\pcdo:$}{%
            \pcfor j\in \mathbb{J}\text{ s.t. } \textsf{valid}_j=0: \\
            \t \pcif \textsf{final}_j = 1, \pcthen \mathsf{DNN.retrieve}(\textsf{bid}_j)\rightarrow v_j; \pcelse v_j = \perp\\
            \textbf{output } (v_j)_{j\in\mathbb{J}}
        }
    \end{pcvstack}
    \caption{Our extended broadcast channel}\label{fig:bc}
\end{figure}

To address these security vulnerabilities, we suggest using DDN and PBB together in a smarter way. Recognizing the potential threat of adaptive corruption, the sender directly multicasts the data block to all receivers while posting the block ID \textsf{bid} into the PBB. Importantly, this process does not induce additional overhead compared with the DDN-based dissemination since there is only one provider, and all receivers will require the data block. For agreement, an honest majority committee is sampled, which subsequently votes to validate the accessibility of the data block against the advertised \textsf{bid}. In scenarios where the majority of the committee members vouch for the data block's availability, all receivers who successfully received the data block are then prompted to register on the DDN. This ensures that any receivers who fail to receive the data through multicast will be able to retrieve it from DDN.

\smallskip\noindent{\bf Protocol details.}
We assume the PKI setup, as well as the setup for the VRF-based sortition, such that everyone in the group gets to know others' verification keys w.r.t. a digital signature scheme and VRF. A ratio $\textsf{ratio}_{\mathsf{hm}}$ is also determined in the setup, which ensures a high probability that the sampled committee will contain an honest majority. Moreover, we assume every message has been signed by the sender. Besides that,  a session id $\textsf{sid}$ and an initial counter $t'_0$ are supposed to be known to everyone in the group and can be used to retrieve related messages from the PBB. We w.l.o.g. describe our protocols in a batch manner, \textit{i.e.}, there can be multiple senders, as this is the situation of our DKG protocol. We elucidate our design in Fig.\ref{fig:bc}.

\smallskip\noindent{\bf Complexity analysis.} Assume there are $s$ senders, and each of them broadcasts a message of $\ell$ bits to the group with $n$ nodes. The communication cost of our extended broadcast channel is 
\begin{equation*}
s\cdot \mathcal{B}(\ell)=s\mathcal{PB}(\secpar) + \mathcal{O}(sn\ell) +  c\mathcal{PB}(\secpar + s) + n \mathcal{R}(s),
\end{equation*}
where $\secpar$ denotes the security parameter (\textit{i.e.}, the size of a digest, the output length of a VRF, \textit{e.t.c.}), $s\mathcal{PB}(\secpar)$ is caused by that $s$ senders post their digests into the PBB, $\mathcal{O}(sn\ell)$ is caused by that the senders multicast their message and the receivers retrieve from a DDN, $c\mathcal{PB}(\secpar + s)$ is caused by the selected committee members vote for the broadcast status, and $n \mathcal{R}(s)$ is caused by that the honest parties register to the DDN. Now, the on-chain storage cost is\textit{ independent of $\ell$}.

\smallskip
\noindent{\bf Security analysis.} We establish the security of our extended broadcast channel in the following lemma.
\begin{lemma}
Assume the underlying PBB satisfies validity and agreement, and the DDN guarantees the data block can be retrieved when there is an honest and active provider. The protocol in Fig.\ref{fig:bc} satisfies the validity and agreement.
\end{lemma}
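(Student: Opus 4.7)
The plan is to decompose the lemma into its two halves and reduce both to the stated guarantees of the underlying PBB and DDN together with the honest majority of the Round-2 committee sampled via $\mathsf{VRF.Sortition}$ under $\textsf{ratio}_{\mathsf{hm}}$. Throughout, I will condition on the overwhelming-probability events that the sampled committee is indeed honest-majority and that $\mathsf{Hash}$ is collision-resistant.

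For validity, suppose sender $S_j$ is honest. PBB validity applied to its Round-1 post gives every honest receiver $\mathsf{bid}_j = \mathsf{Hash}(v_j)$, and multicast validity gives every honest receiver the correct $v_j$, so each sets $\textsf{valid}_j = 1$. In particular, every honest committee member votes $\textsf{valid}_j = 1$, and by PBB agreement all honest receivers in Round 3 see the same multiset of votes; honest majority of the committee then forces $\textsf{final}_j = 1$. Since each honest receiver already has $\textsf{valid}_j = 1$, it outputs its locally received $v_j$ directly without invoking the DDN, which is exactly what validity asks for.

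For agreement, PBB agreement on the Round-1 posts and on the Round-2 committee posts lets all honest receivers compute identical $\mathsf{bid}_j$, identical $\mathbb{K}$, and therefore identical $\textsf{final}_j$. If $\textsf{final}_j = 0$, every honest receiver outputs $\perp$. If $\textsf{final}_j = 1$, more than $|\mathbb{K}|/2$ committee members voted $\textsf{valid}_j = 1$, so by honest majority of $\mathbb{K}$ at least one such vote is honest; the corresponding honest node must have received some $v_j^\star$ with $\mathsf{Hash}(v_j^\star) = \mathsf{bid}_j$ via Round-2 multicast and therefore registers $v_j^\star$ to the DDN in Round 3 (since its $\textsf{valid}_j = \textsf{final}_j = 1$). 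DDN retrievability then gives every honest receiver with $\textsf{valid}_j = 0$ a preimage of $\mathsf{bid}_j$, while honest receivers with $\textsf{valid}_j = 1$ already hold one; collision resistance forces all these preimages to equal $v_j^\star$, giving agreement.

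The main obstacle will be the interaction with adaptive corruption: the identity of a committee member is revealed the moment her vote appears on the PBB, so the adversary might try to corrupt her in the window between Round 2 and the Round-3 DDN registration. I would resolve this in the same spirit as in Section~\ref{sect:dkg}: once a vote has been posted to the PBB it cannot be retracted (the adversary inherits the state but not the past actions), and the registration step is performed by \emph{every} honest receiver with $\textsf{valid}_j = \textsf{final}_j = 1$, not merely by committee members. Hence, whenever $\textsf{final}_j = 1$ was decided on the basis of genuine honest votes, there is at least one honest node that still holds $v_j^\star$ and proceeds to register it; the only delicate sub-case --- a malicious sender delivering $v_j^\star$ to exactly one honest node who is then corrupted before registering --- is ruled out by the round-level atomicity convention already built into the communication model of Section~\ref{sect:pre}.
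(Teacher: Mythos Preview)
Your validity argument and the overall decomposition of agreement into the $\textsf{final}_j = 0$ and $\textsf{final}_j = 1$ branches match the paper. The gap is in the last step of the $\textsf{final}_j = 1$ branch: having established that at least one committee member who was honest at voting time cast $\textsf{valid}_j = 1$, you must conclude that some honest node survives to register $v_j^\star$ in Round~3. Your resolution --- that registration is performed by all honest receivers, and that the residual sub-case is ``ruled out by the round-level atomicity convention already built into the communication model of Section~\ref{sect:pre}'' --- does not close this. No such atomicity convention appears in Section~\ref{sect:pre}; the model explicitly allows the adversary to corrupt a node between rounds (and even mid-round, controlling that round's messages). Concretely, a malicious sender can multicast $v_j^\star$ to only a small set of honest nodes, wait for the Round-2 posts to reveal which of them landed in the committee, and then corrupt exactly those nodes before Round~3. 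Your argument supplies no lower bound on the number of honest nodes holding $v_j^\star$, so nothing prevents the adversary from absorbing all of them within its corruption budget of~$t$.

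The paper closes this with a counting step you are missing. Let $B$ be the set of nodes that, immediately before the Round-2 sortition, are either already corrupted or are honest and have received the sender's block; crucially, $B$ is fixed before the committee is revealed. Every committee member that votes~$1$ lies in $B$, so $\textsf{final}_j = 1$ forces $|B \cap \mathbb{K}| \geq |\mathbb{K}|/2 + 1$. But the honest-majority sampling guarantee underlying $\textsf{ratio}_{\mathsf{hm}}$ says precisely that any set of size at most $t$ intersects the sampled committee in strictly fewer than $|\mathbb{K}|/2 + 1$ members with overwhelming probability; hence $|B| > t$. Since the adversary corrupts at most $t$ nodes over the entire execution, some node of $B$ is never corrupted; by construction of $B$ that node is honest, holds the block, and therefore registers it to the DDN in Round~3. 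Replacing your atomicity handwave with this argument would complete the proof.
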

\begin{proof}
Our construction satisfies both the validity and agreement. Regarding validity, in our protocol, when the sender is honest, every honest receiver can receive the message $v$ from the multicast channel and retrieve the digest from the PBB. Then, in \text{round 2}, selected honest committee members would vote for this broadcast (by setting and posting $\textsf{valid} = 1$), such that the final status of this broadcast will be $1$, and honest nodes can decide on $v$.  

Regarding agreement, note that whether $v = \perp$ is determined by the votes on PBB. Therefore, if an honest receiver decides on $v = \perp$, everyone will do the same thing. The potential chance causing disagreement is that when an honest receiver decides on $v \ne \perp$, some receiver cannot successfully retrieve $v$ from the DDN. Below, we show that this case is unlikely to happen. 

Assume that the adversary is allowed to corrupt at most $t$ participants among all the $n$ participants, and the VRF-based sortition at round 2 will yield a committee $\mathcal{C}$ of $c = 2t' +1 $ participants. As the parameter is configured to guarantee the honest majority of the elected committee, it implies that, for any subgroup $A$ whose size is not greater than $t$, the following probability is very small:
\begin{equation*}
    \Pr[|Z|\geq t'+1: Z = A \cap \mathcal{C}].
\end{equation*}
Now, we consider the group $B$ of nodes that are, before the election, either corrupted nodes or honest nodes that have received $v$. In the case that there are $t'+1$ votes endorsing the availability of $v$, it holds that $|B\cap \mathcal{C}| \geq t'+1$, which implies the probability $\Pr[|B|\leq t]$ is small. Therefore, the adversary cannot corrupt all nodes in $B$ even after knowing the committee $\mathcal{C}$. It follows that there is always at least one honest node that has received $v$ and provisioned it to the DDN, such that everyone can retrieve the data from the DDN and can agree on the value $v$.
\end{proof}




\section{Application to All-hands Checkpointing into Bitcoin}\label{sect:checkpointframework}
In this section, we delineate how our DKG yields the first realization of the checkpointing blueprint Pikachu of Filecoin \cite{AzouviV22} that involves all validators in the whole blockchain network, e.g., Filecoin, that has 3700 of them, with various mining power. 

\subsection{Realizing the Bitcoin Checkpointing Pikachu with \dkgname}
We review the checkpointing blueprint Pikachu in Appendix.\ref{app:checkapp}. At a high level, all validators need to run a DKG for Schnorr signature every epoch, and the resulting public keys will be used as Bitcoin addresses \footnote{Bitcoin has supported Schnorr signature since its \textsc{Taproot} update.}. 
A Bitcoin transaction that embeds the digest of the PoS chain at epoch $i-1$ and transfers assets from the address at epoch $i-1$ to epoch $i$ will serve as a checkpoint for epoch $i-1$. All validators jointly run the threshold Schnorr signing protocol to create such checkpointing transactions.

Pikachu only gave a proof-of-concept prototype with 21 participants due to the inefficiency of their underlying DKG scheme. Meanwhile, as they instantiated the threshold Schnorr signature with FROST \cite{KomloG20}, which relies on a coordinator, there may be a single point of failure. In the following, we demonstrate how our \dkgname\ can realize the blueprint efficiently and securely. 

\smallskip\noindent{\bf Sub-ID allocation.} At each epoch $i$, the current validators of the blockchain locally run the deterministic sub-ID allocation algorithm on a publicly agreed power distribution, and then they obtain the same sub-ID allocation outcome. A validator with $m$ sub-IDs will participate in further protocols as $m$ individuals.

Our optimized sub-ID allocation algorithm in Sect.\ref{sec:subid} issues fewer sub-IDs to validators than the straightforward approach. We consider a snapshot of Filecoin's validator distribution\footnote{\url{https://filfox.info/en/ranks/power}}, which has 3700 validators with a total mining power of around 25 EB, while the power unit is 32KB. The standard method may issue around 674 trillion sub-IDs. In contrast, our method identifies that 13 PB can be a good GCD, and only 1688 sub-IDs need to be issued, significantly reducing the scale of the problem. 

\smallskip\noindent{\bf Apply \dkgname.} The validators with 1688 sub-IDs will act like 1688 participants to execute the DKG protocol to generate a public key for Schnorr signature and share the secret keys. 
We set $\textsf{ratio}_{\mathsf{at}}=38/1688$, guaranteeing the committee has at least one good node with a probability of $1-5\times10^{-9}$. Then, the validators can run our \dkgname\, which incurs only around 3 MB of data that needs to be broadcasted. It takes each node a few seconds to finish computation, even facing the maximum number of complaints.

\smallskip\noindent{\bf Checkpointing with non-interactive threshold Schnorr signature.} At epoch $i$, the validators of epoch $i-1$ use their shared keys to sign the checkpointing Bitcoin transaction. Note that no matter how many nodes try to post the signed transaction to the Bitcoin, there will be only one transaction appearing on the chain. To sign this transaction, we adopt the GJKR protocol\cite{GennaroJKR07}, which does not require a coordinator and is thus free of single-point failures. The GJKR protocol involves a DKG as its subroutine for generating the nonce and follows a non-interactive phase where every signer can locally compute its signature share (or called a partial signature). GJKR was believed to be inadequate for large-scale deployment due to its DKG subroutine, which, however, is no longer a bottleneck with our any-trust DKG. Since our DKG is key-expressable (cf. Def.\ref{def:ke} and \cite{GurkanJMMST21}), the static security of the resulting scheme directly follows the recent result in \cite{Shoup23}. Note that despite recent advancements \cite{CritesKM23}, achieving adaptively secure and robust threshold Schnorr without using a coordinator remains a significant open problem. We leave it as future work to analyze the adaptive security of this scheme, namely GJKR with an oracle-aided algebraic simulatable DKG.

\begin{small}
    \begin{table}[]
        \centering
        \caption{Checkpointing cost per annum. in USD.} \label{tab:checkpointcost}
        \vspace{-0.3cm}
        \begin{tabular}{c|c|c|c}
            \#\textsf{Parties}&     $2^7$(Cosmos) & $2^{10}$(Polkadot) & $2^{12}$(Filecoin) \\
            \hline
            \textsf{Babylon} & 1510826.9 & 2266245.6 & 6043306.5\\
            \hline
            Ours & \multicolumn{3}{c}{26048.8}\\
        \end{tabular}
        {\\
            *Based on the Bitcoin price on Mar. 31, 2024: 0.000708 USD per Satoshi. 
            \par}
    \vspace{-0.5cm}
    \end{table}
\end{small}
\subsection{Comparison with Babylon Checkpointing}
\noindent{\bf Overview of Babylon.} Babylon \cite{TasTGKMY23} is a recently proposed checkpointing scheme that does not use DKG and threshold signature. Instead, it employs the following approach: (1) All validators sign the digest of the PoS block to be checkpointed. (2) One honest validator collects and aggregates enough signatures (using the BLS aggregatable signature scheme \cite{BonehLS01}) and publishes a Bitcoin transaction with the $\mathsf{OP\_RETURN}$ code. This transaction contains the epoch number, the digest, the aggregated signature, and a bit vector that indicates the public keys involved.

\noindent{\bf Comparison of Bitcoin Transaction Fees.} It's important to note that for $n$ validators, at least $n$ bits are needed to encode the public key list. A Bitcoin transaction allows 80 bytes with $\mathsf{OP\_RETURN}$, which means the number of Bitcoin transactions per checkpoint grows linearly with the number of validators. Particularly, the epoch number, the block digest, and the aggregated signature together take 88 bytes; the bit-vector requires $n$ bits. Therefore, the number of Bitcoin transactions for a Babylon checkpoint can be calculated as $\#\textsf{Bitcoin Tx}_{\mathsf{Babylon}} = 1 + \lceil\frac{n+64}{640}\rceil$. 

Moreover, since it assumes an honest validator to create the checkpointing transaction, it might have a single point of failure. This issue can be resolved by sampling a committee that includes at least one honest validator for creating Bitcoin transactions. For the more secure version of Babylon, the number of Bitcoin transactions per checkpoint would increase by a factor of the any-trust committee size $\kappa$, i.e., $\#\textsf{Bitcoin Tx}_{\mathsf{secure-Babylon}}= \kappa + \kappa\cdot \lceil\frac{n+64}{640}\rceil$. For $\kappa=29$ (see Table \ref{tab:committee_sizes}) and $n = 2^{12}$, we have 
    $\#\textsf{Bitcoin Tx}_{\mathsf{Babylon}} =8$, while $\#\textsf{Bitcoin Tx}_{\mathsf{secure-Babylon}} =232$.

In comparison, our approach (Pikachu) only requires 1 Bitcoin transaction for each checkpoint, since the transaction is uniquely created via threshold signing, and Bitcoin will only accept one transaction no matter how many validators try to publish it. It is naturally free of single points of failure.

We compare the Bitcoin transaction fees for checkpointing per annum in Table \ref{tab:checkpointcost}, where the cost of Babylon is for its secure version.
%
Following \cite{TasTGKMY23}, we consider the checkpoint transactions to be created hourly. 
We assume, without loss of generality, each Bitcoin transaction has 300 bytes, the transaction fee is 14 Satoshi per byte (such that the transaction can be confirmed within six blocks as per \footnote{\url{https://btc.network/estimate}}), and the price of a Satoshi is 0.000708 USD \footnote{updated on Mar. 31, 2024, from \url{https://coincodex.com/crypto/satoshi-sats/}}. We evaluate the cost for PoS chains with different numbers of validators:  $2^7$ validators for small-scale PoS chains (like the ones in Cosmos \cite{Cosmos}), $2^{10}$validators for moderate-scale chains (like Polkadot\cite{Polkadot}), and $2^{12}$ validators for relatively large-scale chains (like Filecoin \cite{Filecoin}).


\section{Implementation and Evaluation}
We implemented our proposed DKG and present the experimental results in this section.

\smallskip\noindent{\bf Implementation.} We implemented our protocol in Java 8, comprising approximately 1500 lines of code. To facilitate Elliptic Curve operations and communication, we utilized the open-source Java library \texttt{mpc4j}\footnote{\url{https://github.com/alibaba-edu/mpc4j}} and the \texttt{Bouncy Castle} library\footnote{\url{https://www.bouncycastle.org/}}. Given our protocol's primary application in creating checkpoints on Bitcoin, we opted for the \textsf{secp256k1} curve and \textsf{SHA-256} for cryptographic operations. Our implementation includes components such as VRF and multi-recipient encryption but does not employ the broadcast extension trick in Appendix \ref{sec:bc}. It is essential to note that this implementation serves as a proof-of-concept, demonstrating the practicality of our protocol for large-scale deployment, even under the presence of the maximal number of Byzantine nodes.  We do not implement forward-secure signatures; however, their cost is marginal and independent of the scale.
Whenever possible, we set the expected size $s$ of an any-trust group to 38, which ensures that the committee qualifies with a probability of $1 - 5 \times 10^{-9}$, as in Table \ref{tab:committee_sizes}. For small-scale tests like $n = 16$ and $32$, we set $s = n/2 +1$.


\begin{figure}[h]
    \centering
    \begin{subfigure}[t]{\linewidth}
        \centering
        \includegraphics[width=1\linewidth]{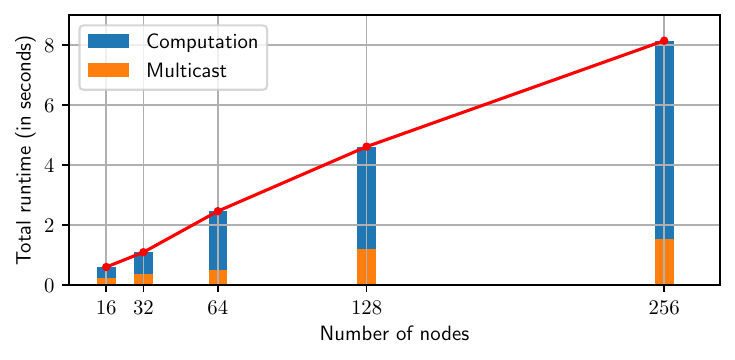}
        \vspace{-0.7cm}
        \caption{Worst-case Adjusted Running Time of bad instances.}
        \label{subfig:e2e-running-time}
    \end{subfigure}

    \begin{subfigure}[t]{\linewidth}
        \centering
        \includegraphics[width=1\linewidth]{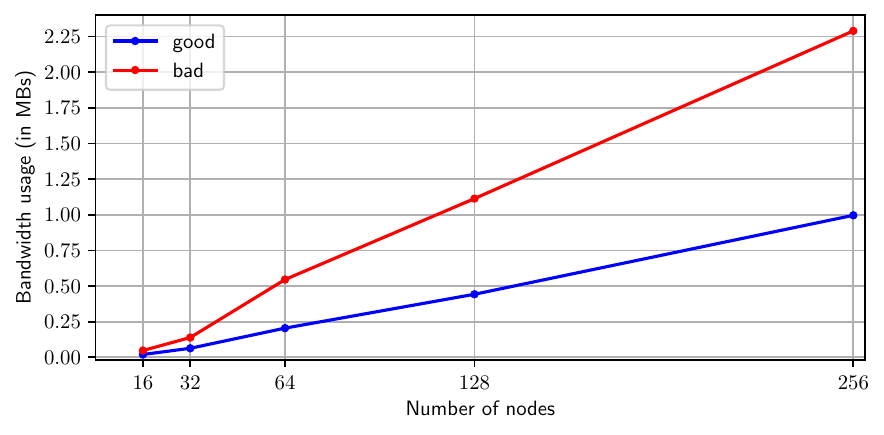} \vspace{-0.7cm}
        \caption{Worst-case bandwidth usage, the amount of data transfers inbound to and outbound from a node during the ATDKG protocol.}
        \label{subfig:e2e-bandwidth}
    \end{subfigure}
    \vspace{-0.4cm}
    \caption{End-to-end Test Results}
    \label{fig:e2e-result}
    \vspace{-0.6cm}
\end{figure}
\subsection{End-to-End Implementation}

\smallskip\noindent{\bf Evaluation Setup.} We evaluate our \dkgname\ implementation with a varying number of nodes: 16, 32, 64, 128, and 256. Each node is encapsulated within an individual Amazon Web Services (AWS) \textbf{t3a.medium} EC2 virtual machine (VM). Each VM has 2 vCPUs and 4 GiB RAM and runs in \textsf{Amazon Linux 2023 AMI 2023.4.20240416.0 x86\_64 HVM kernel-6.1}. All nodes are placed in the same AWS region and are connected pair-wise; for example, every two nodes are directly connected.
Since the network delay within the same AWS region is almost negligible, we simulate a more realistic delay by employing the Linux command \texttt{tc} (traffic control) to introduce an artificial delay of 100 ms for all traffic.

\smallskip\noindent{\bf Implementation Remarks.} 
We set up an additional node to simulate a blockchain, which serves as the broadcast channel in our implementation. The blockchain node is directly connected to all other nodes. In \text{Round 1} and \text{Round 3} of our DKG, whenever a node needs to broadcast a message, it sends the message directly to the blockchain node. The blockchain node then relays all received messages in the round to every node in the network. As our protocol assumes network synchrony and proceeds round by round, we need to specify the time window for each round. 

In practice, the time window setting for Round 1 and 3 can vary depending on the blockchain. For simplicity, we artificially configure the time window to be 30 seconds: the blockchain node receives messages in the first 20 seconds and then relays the messages. All nodes stop receiving current-round messages at 30 seconds and move to the next round. Given such a configuration, a 60-second broadcast running time is inherent to our experiments, and our experiments are more concerned about the running time incurred by Round 2 and the computation cost in Round 1, Round 3 and at the end of Round 3.


We evaluate the performance of our DKG in both the good-case and bad-case scenarios. In the good cases, all nodes are honest. In the bad cases, we set all nodes whose node ID is smaller than $n/2$ to be corrupted. A corrupted node, if elected as a dealer in Round 1, will broadcast malformed ciphertexts to all nodes, causing $n$ complaints against it in Round 2. Given a fixed number of nodes and good or bad cases, each experiment configuration is repeated eight times.

In reality, the timeout parameter for the \textbf{receive\{\}} at the start of Round 3 should be calibrated based on the communication cost in bad cases. In our implementation, the calibration is implicit: the timeout parameter is set to a sufficiently large value, while the actual communication cost in bad cases is measured independently.

\smallskip\noindent{\bf Adjusted Running Time.} The total running time of the entire \dkgname\ protocol can be defined by the time difference between the moment when the communication network is established and when a node finishes computing the shared public key and its secret share.
However, this measurement will always incorporate the 60-second broadcast time, which may vary in different settings.
Hence, an adjusted running time is measured by subtracting the 60-second broadcast cost from the running time of the whole \dkgname\ protocol. Observe that the adjusted running time consists only two components: the communication cost incurred by the \textbf{multicast} in Round 2, and all computation costs throughout this protocol.

We take the \textbf{maximum} adjusted running time across all nodes, and all repeats, to represent the end-to-end running time of our protocol.
The adjusted running time of bad cases are shown in Fig.\ref{subfig:e2e-running-time}. In additional, a breakdown by the \textbf{multicast} communication cost and the computation cost is shown within the stacked bar chart.
Note that the good cases should always perform better than the bad cases, hence, we only represent the bad cases to demonstrate the worst-cast scenario.

Our DKG protocol only requires a few seconds to finish the multicast round and all computation tasks, in additional to the omitted 60-second broadcast cost.

\smallskip\noindent{\bf Bandwidth Usage.} We record the inbound and outbound bandwidth of each node in Megabytes ($10^6$ bytes) and demonstrate the \textbf{maximum} bandwidth usage of all nodes, and all repeats in Fig.\ref{subfig:e2e-bandwidth}. The key observation is that the bandwidth grows linearly depending on the size of the group.

At first glance at the results, some non-linearity may be noticed. However, this is mainly caused by (a) a lower sortition ratio for $n=16$ and $n=32$ and (b) the randomness in the sortition results in the \dkgname\ protocol.
Specifically, the protocol has no deterministic control over the actual number of parties being elected as dealers, meaning fluctuations will be observed in bandwidth usage, as the actual number of dealer may vary. 


\subsection{Performance Analysis on Large Scale}\label{sect:estimatedformance}
While our end-to-end implementation demonstrates that our protocol remains practical when $n = 2^8$, we further tested the computation time of our protocol and estimated the communication cost on larger scales ranging from $n = 2^9$ to $n = 2^{15}$, this range covers the sizes of most PoS chain validators. 

\smallskip\noindent{\bf Broadacst cost.} 
We calculate the total number of bits to be sent via the broadcast channel. We compare our protocol and KZG in terms of it, ranging from $n = 2^9$ to $n = 2^{15}$, considering both the good case and the bad case with the maximal number of complaints. Note that in KZG, a share along with the proof for validating has the size of $224$ Bytes; in the bad case, there are $n^2/2$ shares (with their proofs) to be broadcasted for public verification.

As shown in Figure \ref{fig:bccost}, for our protocol, the costs in the good case and the worst case are very close and grow steadily. For $n = 2^9$, the cost is around $1.05$ MB, while for $n = 2^{15}$, the cost is approximately $61.1$ MB. In contrast, while the good-case KZG has very low broadcast costs, its worst-case costs grow quadratically and would require over $120$ GB when $n = 2^{15}$. 

\begin{figure}
    \centering
    \includegraphics[width=.9\linewidth]{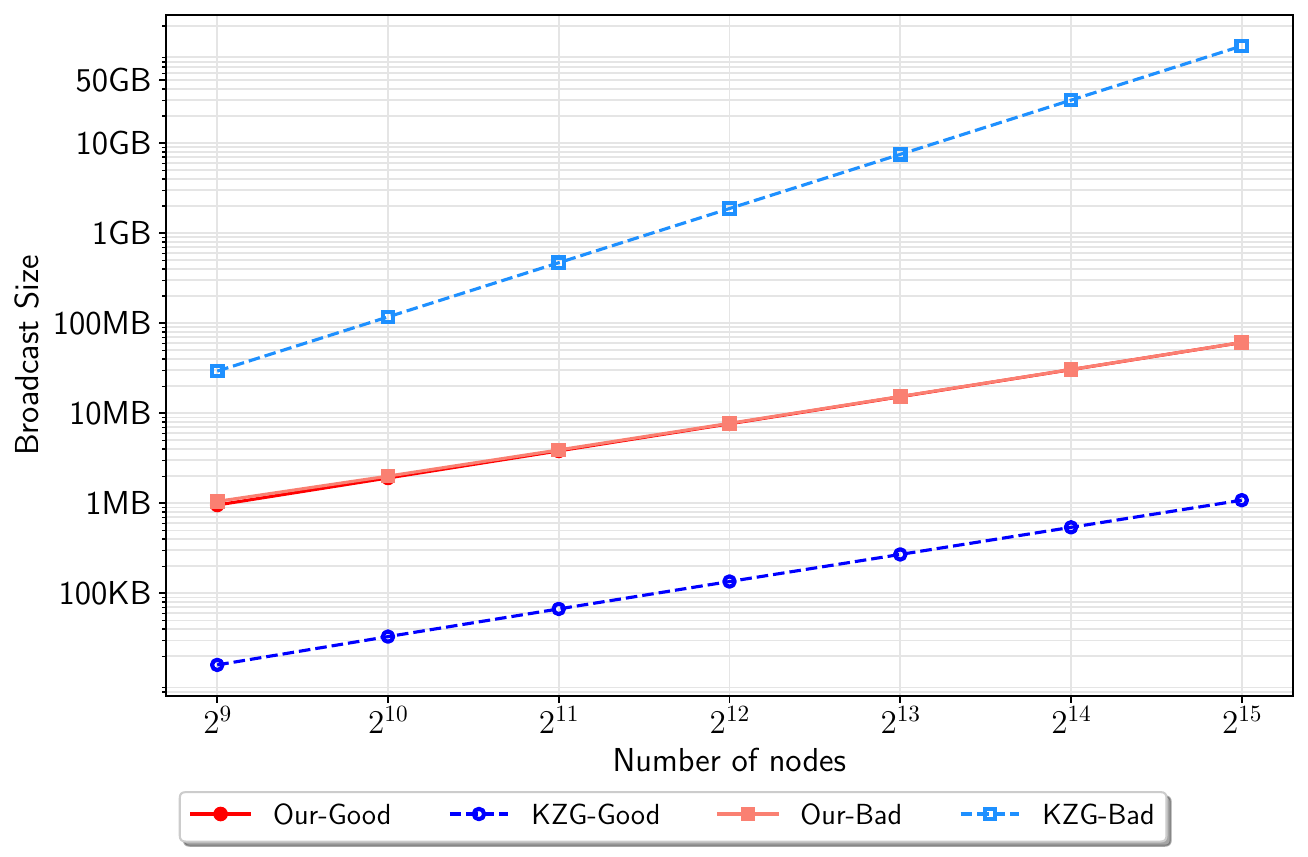} \vspace{-0.4cm}
    \caption{Broadcast Channel Overhead}
    \label{fig:bccost}
    \vspace{-0.4cm}
\end{figure}

\begin{figure}[h]
    \centering
    \includegraphics[width=.9\linewidth]{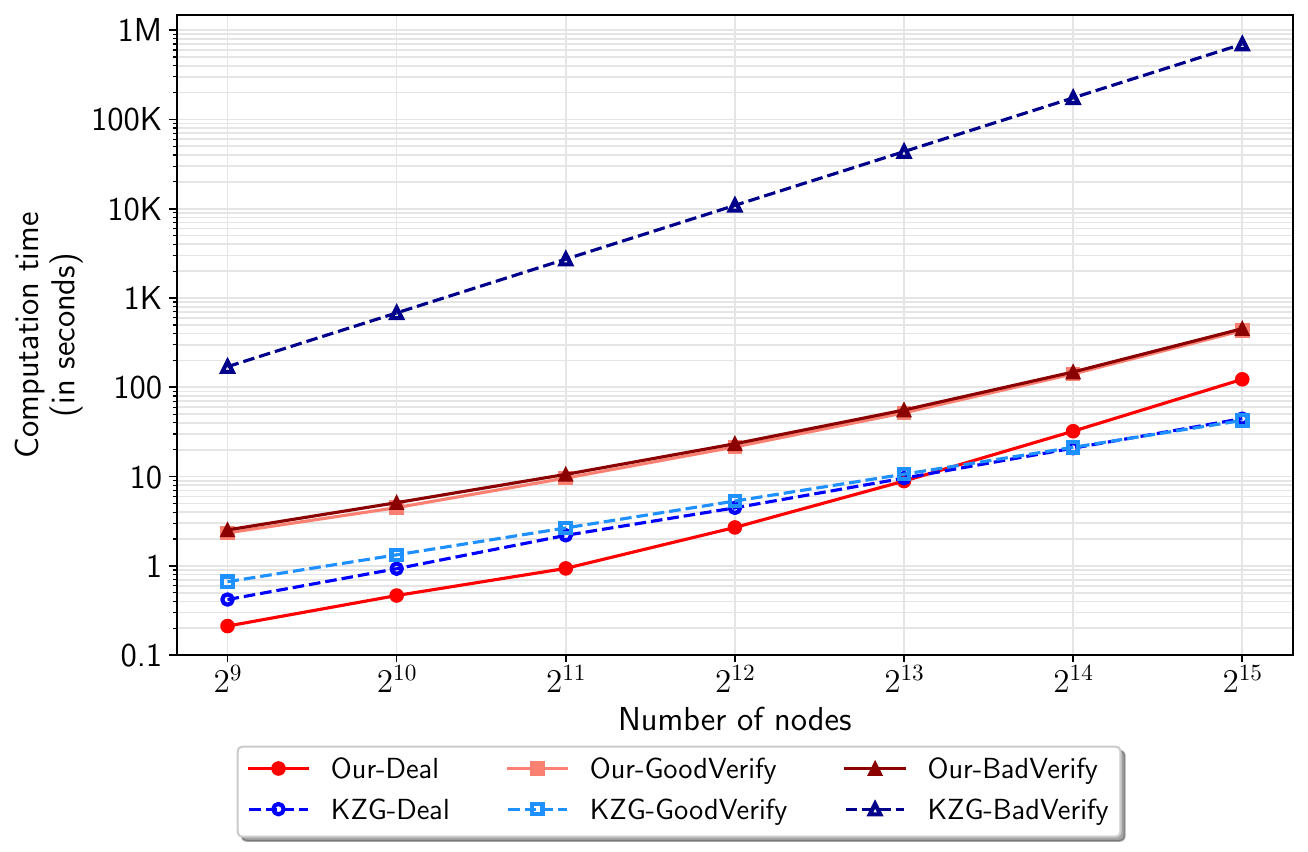} \vspace{-0.4cm}
    \caption{Computation Overhead}
    \label{fig:computationcost}
    \vspace{-0.4cm}
\end{figure}

\smallskip\noindent{\bf Computation time.} We conducted tests to measure the computation time for generating a secret-sharing transcript (Deal) and reaching an agreement on a qualified set (Verify) in both good case and bad case on AWS c5a.large (AMD EYPC 7002 CPU with 2 cores and 4 GB RAM). We compared our results with KZG, utilizing the reported findings from \cite{ZhangXHSZ22} for the good case while estimating the worst-case scenario by assuming that $n^2/2$ shares need to be verified (each share verification takes 1.3 ms). As illustrated in Fig.\ref{fig:computationcost}, in the good case, our protocol's performance is comparable to or even better than KZG, although their programming language (C++) and environment (AWS c5a.24xlarge, AMD EYPC 7002 CPU with 96 cores, and 187 GB RAM) are supposed to be superior to ours. However, in the worst-case scenario, our protocol remains efficient while KZG becomes infeasible. 

Note that our computation time grows faster than KZG's, which we believe is due to the use of a na{\"i}ve implementation of multi-point polynomial evaluation. The complexity of our current implementation is \(O(n^2)\) for evaluating an \(O(n)\)-degree polynomial at \(O(n)\) points. In contrast, the implementation in \cite{ZhangXHSZ22} employs an optimized algorithm whose complexity is \(O(n\log^2 n)\). However, it is important to highlight that our DKG protocol can benefit from the \(O(n\log^2 n)\) polynomial evaluation algorithm as well, and our implementation can be enhanced if a Java implementation for the algorithm becomes available.


\section{Related Works}\label{ap:relatedworks}
\noindent{\bf VSS-based DKG.} Distributed Key Generation (DKG) has been a prominent area of research for several decades. Pedersen's seminal work \cite{Pedersen91} established the foundation in this field by introducing an efficient protocol for Dlog-based cryptosystems. This protocol builds upon Feldman's Verifiable Secret Sharing (VSS) \cite{Feldman87}. Within this scheme, each participant collaboratively runs $n$ instances of Feldman's VSS, taking on the role of the dealer in one of these instances.

In the VSS framework established by Feldman, the dealer is required to broadcast a commitment to a polynomial while distributing the shares privately among all participants. Given that the commitment's size is proportional to $O(n\secpar)$, the resultant communication overhead becomes $O(n\mathcal{B}(n\secpar))$. Additionally, Pedersen's DKG involves a complaint phase where participants broadcast any grievances against dishonest dealers. If a participant were to lodge multiple complaints concurrently, the communication overhead of this phase is likewise $O(n\mathcal{B}(n\secpar))$. It is vital to highlight that during this phase, each participant may validate up to $O(n^2)$ shares. In Feldman's VSS, the computational effort to validate a single share is equivalent to $O(n)$ group operations. This implies a per-node computational burden before the complaint phase of $O(n^2)$, which can potentially amplify to $O(n^3)$ during the complaint process.

A majority of DKG architectures conform to the joint-VSS model. In essence, any innovative VSS protocol can be adapted into a new DKG protocol. Furthermore, given that VSS can be constructed using polynomial commitments, any polynomial commitment scheme can be evolved into both a VSS and, consequently, a DKG. A significant advancement in this field was made by Kate et al. \cite{KateZG10}, who proposed the first polynomial commitment (abbreviated as KZG) with a commitment size of $O(\secpar)$. This innovation ensures that prior to the complaint phase, the communication overhead can be reduced to $O(n\mathcal{B}(\secpar))$. A notable feature of the KZG polynomial commitment is its efficiency in verifying shares; the computational cost for verifying a single share is a mere $O(1)$. This denotes that the computational overhead for each node, in terms of verification before the complaint phase, is simply $O(n)$ in group operations, but this can rise to $O(n^2)$ during the complaint process. Historically, the computational load for producing a polynomial commitment was believed to be $O(n^2)$ \cite{TomescuCZAPGD20}. However, a recent exploration by Zhang et al. \cite{ZhangXHSZ22} revealed that the computational overhead for generating a KZG commitment can be streamlined to $O(n\log n)$. It's noteworthy that although KZG requires a CRS setup, there have been other efforts \cite{ZhangXHSZ22, YurekLFKM22} that prioritize efficient polynomial commitments without relying on a trusted setup, but these don't match KZG's efficiency.

 \smallskip\noindent{\bf PVSS-based DKG.} \text{Fouque and Stern} \cite{FouqueS01} offered a solution that sidestepped the necessity for a complaint phase by incorporating publicly verifiable secret sharing (PVSS). In the event that a PVSS transcript consists of $O(n)$ ciphertexts, the communication overhead will naturally be $O(n\textsf{BB}_{n}(n\secpar))$ should every participant choose to broadcast this transcript. Historically, the validation of a PVSS transcript required an overhead of $O(n^2)$, suggesting that the per-node computational overhead in DKG might ascend to $O(n^3)$. However, this obstacle was surmounted by Cascudo and David with their Scrape protocol \cite{CascudoD17}, which introduced a PVSS methodology that caps the verification duration at $O(n)$. It's worth highlighting that Scrape's strategy is versatile and can be applied to improve many VSS-based DKG schemes, including that of Pedersen's, ensuring that computational overhead during the complaint phase is kept at $O(n^2)$ and doesn't spike to $O(n^3)$. A few recent works focus on improving the concrete performance of PVSS schemes, including the lattice-based PVSS \cite{GentryHL22}, Groth's PVSS \cite{Groth21}, and PVSS using class groups \cite{KateMMST23}.

 \smallskip\noindent{\bf Aggregatable-PVSS-based DKG.} Aggregatable PVSS schemes \cite{GurkanJMMST21} are PVSS schemes whose transcripts can be concisely merged into one.  
 There are a few designs that leverage customized communication protocols rather than simply leveraging Byzantine broadcast protocols (or broadcast channels), enjoying asymptotically better complexity. Notably, \text{Gurkan et al.} \cite{GurkanJMMST21} leveraged an aggregatable PVSS combined with gossip protocols to craft a publicly verifiable DKG. Their communication overhead is streamlined to $n\mathcal{B}(\secpar)+\log n\cdot\mathcal{B}(n\secpar)$ as opposed to $n\mathcal{B}(n\secpar)$, with their per-node communication overhead being $O(n\log^2n)$. It's pertinent to note, however, that their model can only accommodate $O(\log n)$ Byzantine nodes. Very recently, Feng et al.\cite{FengLT24} and Bacho et al. \cite{BachoLLOP23} leverage specially designed communication protocols together with aggregatable PVSS schemes and present DKG schemes with sub-quadratic per-party computation/communication cost while enjoying optimal resilience. 
 
 Note that existing aggregatable PVSS schemes all produce secrets in an Elliptic curve group, thus incompatible with many threshold cryptographic protocols. Feng et al. \cite{FengLT24} also give a variant of DKG using conventional PVSS schemes but with slightly higher (still sub-quadratic) per-party complexity.

 \smallskip\noindent{\bf DKG in the YOSO model.} A common strategy to enhance scalability is selecting a committee and executing the threshold cryptographic systems within this smaller subset. However, this approach is fraught with challenges. Once aware of the committee's composition, an adaptive adversary can compromise the entire group, thereby undermining security. Furthermore, given that each member of the committee is required to contribute multiple times during both key generation and subsequent threshold operations, methods like silent committee sampling (e.g., using a verifiable random function \cite{ChenM19}) and assuming memory erasure fail to provide protection against adaptive attackers. Recent advances in the YOSO (You-Only-Speak-Once) MPC realm \cite{GentryHKMNRY21, BenhamoudaHKMR22} hint at potential solutions to deter adaptive adversaries targeting the committee. Benhamouda et al. \cite{BenhamoudaHKMR22} presents a DKG in the YOSO model. However, the YOSO techniques come with their own set of challenges. Notably, existing YOSO techniques (if without using resource-intensive tools like fully homomorphic encryption \cite{GentryHMNY21} ) need to sample a huge committee, say with a few or tens of thousands of nodes, which is already as large as the network scale we are interested in, let alone the extra overhead incurred by using YOSO techniques. Furthermore, as successive committees remain anonymous, inter-committee communication is heavily dependent on a broadcast channel.

\smallskip\noindent{\bf On the security of DKG.} Beyond endeavors aimed at bolstering the efficiency of DKG, various research initiatives have tackled this challenge from different perspectives. Gennaro et al. \cite{GennaroJKR07} pinpointed vulnerabilities in Pedersen's DKG where the secret key distribution could be manipulated by adversaries. They addressed this flaw by achieving complete secrecy, albeit with a higher computational overhead. Gurkan et al. \cite{GurkanJMMST21} conceptualized a milder form of secrecy, coined as ``key-expressability", which assumes that adversaries can influence key distribution but within predetermined constraints. They postulated that a key-expressable DKG suffices for many applications, with multiple DKG architectures, including Pedersen's \cite{Pedersen91}, Fouque-Stern's \cite{FouqueS01}, and our own, fitting this criteria. Another remarkable contribution by Canetti et al. \cite{CanettiGJKR99} introduced a DKG protocol with adaptive security, a departure from our model and numerous others that ensure security only against static adversaries. Bacho and Loss's recent work \cite{BachoL22} put forth an oracle-aided adaptive definition and ascertained that several protocols, including \cite{Pedersen91, FouqueS01}, conform to this definition in the algebraic group model. Our model also complies with this adaptive security definition.

\smallskip\noindent{\bf Asynchronous DKG.} Lastly, some recent research efforts \cite{GaoLLTXZ22, AbrahamJMMS23, DasYXMK022} have pivoted towards DKG within asynchronous networks. These designs adopt the joint-VSS blueprint and depend on an asynchronous broadcast protocol, referred to as ``reliable broadcast" \cite{bracha1987asynchronous}, to guarantee verifiability, yet they encounter the cubic computational challenge. Notably, Das et al. \cite{DasYXMK022} showcased the inaugural asynchronous DKG with a communication overhead of $O(n^3\secpar)$ for field-element secrets, whereas Abraham et al. \cite{AbrahamJMMS23} furnished an adaptively secure asynchronous DKG with identical complexity.

\section{Conclusion and Future Direction}\label{sect:conclusion}
To enable large-scale threshold cryptographic applications, such as Filecoin's checkpointing, we present an adaptively secure DKG protocol with (quasi)linear per-node communication and computation costs. The key idea is to use a common coin, well-established in blockchain settings, to sample a small any-trust committee for secret contributions, which suffices for DKG security. The main challenge is countering adaptive attackers, which we address through carefully applied techniques. We also introduce a method to deploy our DKG in a weighted setting without compromising efficiency.

Our approach assumes a synchronous network, leading to a future challenge: designing a scalable, adaptively secure DKG for asynchronous settings that better reflect real Internet conditions. Another broader question is how to adapt DKG and threshold protocols to other cryptographic systems, like lattice-based ones, while maintaining our performance and security metrics.

\section*{Acknowledgements}
We thank Marko Vukoli\'{c} and Alejandro Ranchal-Pedrosa for the helpful discussions. This work was supported in part by Protocol Labs Research Grants under the RFP-012 on Checkpointing Filecoin onto Bitcoin.

	\bibliographystyle{ACM-Reference-Format}
	\bibliography{refer}

	\appendix

\section{Supplementary Materials for Bitcoin Checkpointing}\label{app:checkapp}

\subsection{The Blueprint of Pikachu}
\noindent{\bf Long-range attacks against PoS blockchain.} Unlike proof-of-work chains, block creation in PoS systems is both costless (in terms of physical resources like energy) and timeless (unconstrained by time limits), which enables adversaries to easily fork a chain. Existing PoS chains prevent malicious forking by punishing misbehavior validators. However, an attacker can choose to present the fork chain after all its stakes have been withdrawn, thus free of being slashed, 
What is worse, a late coming client may not be able to decide the canonical chain among the forks.

\smallskip\noindent{\bf Securing PoS with Bitcoin checkpointing.} A few works \cite{AzouviV22, TasTGKMY23} have shown that long-range attacks can be effectively mitigated by creating checkpoints of the PoS chain on a PoW chain, such that a late coming client can distinguish the canonical chain among forks. Pikachu illustrates a threshold signature-based checkpointing mechanism.
At a high level, the lifetime of the PoS system is divided into multiple epochs, and checkpoints are supposed to be created per epoch. At every epoch $i$, a configuration $C_i=\{(\mathcal{V}_{i,j},w_{i,j})\}_{j\in[n_i]}$ for some integer $n_i$, which is the set of all validators $\{\mathcal{V}_{i,j}\}_{j\in[n_i]}$ with their weights $\{w_{i,j}\}_{j\in[n_i]}$, is associated with a public key $Q_i$ (w.r.t. Schnorr signature scheme) which can serve as a Bitcoin address, while the secret key of $Q_i$ is secretly shared among $C_i$. At epoch $i+1$, validators in $C_i$ will jointly create a Bitcoin transaction which transfers all assets on $Q_i$ to $Q_{i+1}$, the address belongs to the current configuration $C_{i+1}$; This transaction is the checkpoint. We elucidate their design with the following three algorithms/protocols\footnote{Slightly different from their original description where the PoS digest is embedded into the Bitcoin address, we choose to put it in $\mathsf{OP\_RETURN}$ for simplicity.}.

\begin{itemize}
    \item $\mathsf{AllocateSubID}(C)\rightarrow \{d_j\}_{j\in[n]}$.The sub-identity allocation algorithm takes input as a configuration $$C =\{(\mathcal{V}_{j},w_{j})\}_{j\in[n]}$$ and determines the number of sub-identities $d_j$ for each $\mathcal{V}_j$ according to their weight $w_j$.
    \item $\mathsf{DKG}(\{(\mathcal{V}_{j}, d_j)\}_{j\in[n]})$.
    The validators in $C$ run a DKG protocol, while each sub-identity is viewed as an independent participant. Therefore, each validator $\mathcal{V}_j$ obtains $d_j$ pairs of $(pk_{j,z},sk_{j,z})_{z\in[d_j]}$, and all validators obtain the same public key $Q=pk$ and the list of public key shares $\vec{pk}= (pk_{j,z})_{j\in[n], z\in[d_j]}$.
    \item $\mathsf{CreateCKP}(C_i,\textsf{ckp},\textsf{PreAdd},Q_{i+1})\rightarrow \textsf{TX}.$ At the epoch $i+1$, assume that validators in $C_{i+1}$ have generated the public key $Q_{i+1}$, the digest of PoS block to be checkpointed is $\textsf{ckp}$, and the address of the last checkpointing Bitcoin transaction is $\textsf{PreAdd}$. Then, the validators in $C_i$ invoke a Threshold Schnorr protocol to sign a Bitcoin transaction $\textsf{TX}$ with the following information.
    \begin{equation*}
        \{\textsf{Input}: \textsf{PreAdd}; \textsf{Output}: Q_{i+1}; \textsf{OP\_Return}: \textsf{ckp}\}.
    \end{equation*}
    Once the transaction has been properly signed, every validator should disseminate it to the Bitcoin network.
\end{itemize}

With checkpoints on Bitcoin, it is rather straightforward for a late-coming user to decide which fork is the canonical chain when the user is provided with a block tree of finalized PoS blocks. Specifically, the user first synchronizes with the Bitcoin blockchain. Then, it finds the initial checkpoint transaction and builds a chain of transactions following the initial transaction. Next, it obtains the digest $\textsf{ckp}$ from the latest checkpoint transaction and decides the fork with the block whose digest is $\textsf{ckp}$ as the canonical chain. Moreover, while other approaches like key-evolving forward-secure signatures \cite{ChenM19, DavidGKR18} may also mitigate long-range attacks, the checkpointing mechanism enjoys the unique advantage of ensuring malicious validators are always slashable. We defer a detailed discussion to Sect.\ref{ap:baby}.

\subsection{Security of Checkpointing}\label{ap:baby}
This paradigm has been thoroughly analyzed in \cite{AzouviV22}. It considers an efficient adversary $\adv$, which at each epoch $i$ can corrupt all validators in previous configurations $\{C_{j}\}_{j<i-L}$ and a fraction of validators up to $f$ in ``recent" configurations $\{C_j\}_{i-L<j\leq i}$, for some parameter $L$ such that the checkpoint transaction for epoch $i_0$ will be confirmed in Bitcoin by epoch $i_0+L$.
Such an adversary can mount long-range attacks by using the previous secret keys to forge another validate-looking chain (called a long-range attack chain). However, since the Bitcoin blockchain has recorded transactions that transferred all assets from previous addresses $\{Q_j\}_{j<i-L}$, $\adv$ cannot create valid checkpoints using secret keys of $\{Q_j\}_{j<i-L}$. Therefore, a bootstrapping client can decide the canonical chain with Bitcoin checkpoints.
We summarize their results in the following theorem.
\begin{theorem}[\cite{AzouviV22}]
    Assume both the Bitcoin blockchain and the PoS chain satisfy consistency, chain growth, and chain quality (as defined in \cite{GarayKL15}). Assume the Threshold Schnorr signature satisfies unforgeability and robustness under the DKG protocol against $\adv$ corrupting up to $t$ sub-identities,  and $\mathsf{AllocateSubID}$ allocates at most $t$ sub-IDs to $\adv$. Then, the checkpointing mechanism satisfies the following properties.
    \begin{itemize}
        \item Safety. $\adv$ cannot produce any valid checkpointing transactions for long-range attack chains.
        \item Liveness. $\adv$ cannot stop the checkpoints from happening.
    \end{itemize}
\end{theorem}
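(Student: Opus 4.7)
The plan is to prove safety and liveness separately, in each case reducing the desired property to the assumed security of the underlying primitives (unforgeability and robustness of threshold Schnorr, together with the chain properties of Bitcoin) and the guarantee of $\mathsf{AllocateSubID}$. Since the theorem is already credited to \cite{AzouviV22}, my write-up would mainly verify that our instantiation supplies the premises of their argument, rather than re-deriving the full proof from scratch.

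For safety, I would argue by contradiction. Suppose $\adv$ produces a valid checkpointing transaction $\textsf{TX}^*$ for a long-range attack chain, whose first ``forked'' checkpoint concerns some epoch $j$ with $j < i - L$, and therefore must be signed under $Q_j$ and spend from $\textsf{PreAdd}_j$. By the adversarial model, within the recent window $\{C_{j'}: i-L < j' \leq i\}$ at most a fraction $f$ of the weight is corrupt, and by the qualified-allocation guarantee of $\mathsf{AllocateSubID}$ (Def.\ \ref{def:quallo}) this maps to at most $t$ sub-identities, so unforgeability and robustness of the threshold Schnorr scheme apply. The only way $\adv$ can then have a signature of $C_j$ on $\textsf{TX}^*$ is if (i) the honest majority of $C_j$ already signed it, or (ii) $\adv$ eventually corrupts more than $t$ sub-identities of $C_j$. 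Case (i) is ruled out because, by $t$-correctness of our DKG and the signing protocol specification, honest $C_j$-members only signed the canonical transaction transferring the funds of $Q_j$ to $Q_{j+1}$. Case (ii) requires waiting until $C_j$ falls out of the recent window, i.e.\ past epoch $j+L$; but by Bitcoin's chain growth and consistency, the canonical checkpoint spending $\textsf{PreAdd}_j$ is confirmed by epoch $j+L$, so $\textsf{TX}^*$ becomes a rejected double-spend attempt.

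For liveness, the argument is more direct. At every epoch $i+1$, the current committee $C_i$ lies in the recent window, hence at most $t$ sub-identities are corrupt. By $t$-correctness of \dkgname\ (Thm.\ \ref{theo:dkgsecurity}) the key generation phase succeeds, and by robustness of the threshold Schnorr protocol executed with $\mathsf{CreateCKP}$ the honest sub-identities jointly produce a valid signature on the checkpoint transaction. Any honest validator then submits the transaction to the Bitcoin network, and chain quality plus chain growth ensure it is included and confirmed within $L$ epochs.

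The main obstacle, and what I would spend the most care on, is the bookkeeping between the adversarial corruption windows and the Bitcoin confirmation delay $L$: the safety argument really needs that by the time $\adv$ can break the $t$-threshold for $C_j$, the canonical checkpoint from $C_j$ is already beyond Bitcoin's common-prefix depth. Turning this into a clean statement requires plugging in the quantitative bounds from \cite{GarayKL15} for the two chains and doing an induction on epochs to rule out any earliest divergence point. The remaining work is to check that the premises invoked by \cite{AzouviV22}---in particular unforgeability and robustness of Threshold Schnorr built on top of a DKG satisfying Def.\ \ref{def:oracleaided} and Def.\ \ref{def:ke}---are all supplied by Thm.\ \ref{theo:dkgsecurity} combined with the analysis of \cite{Shoup23,GennaroJKR07}.
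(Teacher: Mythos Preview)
Your proposal is sound and in fact more detailed than what the paper provides. The paper does not give a proof for this theorem at all: it is stated purely as a cited result from \cite{AzouviV22}, with the only justification being the informal paragraph preceding the theorem statement, which sketches exactly your safety idea (namely that the Bitcoin chain already records transactions spending from all previous addresses $\{Q_j\}_{j<i-L}$, so the adversary cannot create valid checkpoints using those old keys, and a bootstrapping client can therefore identify the canonical chain). There is no liveness argument and no inductive bookkeeping in the paper itself.

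Your write-up therefore goes beyond the paper's own treatment. Your decomposition into safety-by-contradiction (with the two cases for how $\adv$ could obtain a $C_j$-signature) and liveness-from-robustness is the natural one and matches the intuition in \cite{AzouviV22}; your observation that the crux is aligning the corruption window with the Bitcoin confirmation parameter $L$ is exactly right. The only thing to be careful about is that you invoke Thm.~\ref{theo:dkgsecurity} and Def.~\ref{def:ke} to supply the premises for \cite{Shoup23}, but the paper itself only claims static security of the resulting threshold Schnorr (``we leave a fully analysis for adaptive security as a future study''), so you should not overstate what is actually established for the signing phase.
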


\smallskip\noindent{\bf On Slashable Safety.} Babylon claims the slashable safety. Specifically, for a PoS system with $3t+1$ units of stake, validators with at least $t$ units should become slashable in the view of all honest validators whenever there is a safety violation. Many PoS systems offer slashable safety against \textit{short-range} attacks by locking validators' stakes for a period and slashing one's stake once proof of security violation is presented. However, long-range attackers can evade being slashed by publishing the attack chain after withdrawing their stakes from the canonical chain. 

It has been proved in \cite{TasTGKMY23} that slashable safety against long-range attacks is impossible without external trust. With this result, \cite{TasTGKMY23} also shows that other approaches for mitigating long-range attacks, such as key-evolving signatures \cite{BadertscherGKRZ18, ChenM19} cannot provide slashable safety. Nonetheless, leveraging the Bitcoin blockchain as an external trust can certainly bypass this impossibility. Assuming that checkpoints for the canonical PoS chain have been properly posted on the Bitcoin blockchain,
the attacker cannot present an attack chain that diverges from the canonical chain before the latest checkpoint. In this case, the attacker must not have withdrawn its stakes and thus is slashable. 

In light of the above, both ours/Pikachu and Babylon can guarantee slashable safety once the checkpoints have been properly created. Now, we turn to examine the case in which checkpoints may not be generated correctly. The adversary has the following options: (1) not make a checkpoint; (2) make a checkpoint for an ill-formed block; (3) make more than one checkpoint for different well-formed blocks at the same height and hide the block whose checkpoint appears earlier;
(4) make a checkpoint for a well-formed block but exclude some valid transactions (for censorship). Babylon introduces an \textit{emergency break} to prevent from (2) and (3). The client can notice these attacks happening and then no longer process this chain. In case the adversary refuses to participate in the checkpoint creation, Babylon considered the punishment of inactivity, which enables the removal of the inactive validators. Regarding censorship resistance (4), Babylon proposed a roll-up technique that is orthogonal to the checkpointing mechanism. 

In our system, as all checkpoints are in the chain of transactions, the adversary cannot mount the attack of (3). For (2) and (4), we can follow the exact same approach as Babylon does. For the attack of (1), it may be hard to identify who makes the DKG/threshold signing fail. Instead, we require a checkpoint to be made by a certain height of the Bitcoin blockchain, and then the client can switch to \textit{emergency} break when it does not find a valid checkpoint by the designated position. In summary, our checkpointing mechanism provides slashable safety as long as honest clients do not switch to emergency breaks.

\end{document}